\documentclass{article}
\usepackage{arxiv}
\usepackage{placeins}
\usepackage{algpseudocode}
\usepackage{soul}
\usepackage{tikz}
\usepackage{float}
\usepackage{xspace}
\usepackage{xcolor}
\usepackage{amsthm}
\usepackage{amsmath}
\newtheorem{definition}{Definition}
\newtheorem{theorem}{Theorem}

\usepackage{amssymb}
\usepackage{graphicx}
\usepackage{longtable}
\usepackage[linesnumbered,ruled,vlined]{algorithm2e}
\usepackage{placeins}
\usepackage{hyperref}       
\usepackage{url}            
\usepackage{booktabs}       
\usepackage{amsfonts}       
\usepackage{nicefrac}       
\usepackage{microtype}      
\usepackage{lipsum}
\usepackage{graphicx}
\graphicspath{ {./images/} }
\usepackage{placeins}
\usepackage{xcolor}
\usepackage{comment}
\usepackage{caption}
\usepackage{soul}
\usepackage{tikz}
\usepackage{float}
\usepackage{xspace}
\usepackage{amsmath}
\usepackage{amssymb}
\usepackage{graphicx}
\usepackage{longtable}
\usepackage[linesnumbered,ruled,vlined]{algorithm2e}
\usepackage{stmaryrd} 

\newcommand{\Real}{\ensuremath{\mathbb{R}}}
\newcommand{\Integer}{\ensuremath{\mathbb{Z}}}
\newcommand{\Ratio}{\ensuremath{\mathbb{Q}}}

\newcommand{\feature}{\ensuremath{\xi}}
\newcommand{\particle}{\ensuremath{\Xi}}
\newcommand{\emptysymbol}{\ensuremath{\square}}
\newcommand{\air}{\text{space}}
\newcommand{\tdv}[2]{\ensuremath{\begin{pmatrix}#1 \\ #2\end{pmatrix}}}
\newcommand{\tdm}[4]{\ensuremath{\begin{pmatrix}#1 & #2 \\ #3 & #4\end{pmatrix}}}

\newcounter{myobsctr}

\numberwithin{myobsctr}{section}

\newcounter{myprectr}

\newenvironment{proposition}{
	\bigskip\noindent
	\refstepcounter{myprectr}
	\textbf{Proposition \themyprectr}
	\newline%
}{\par\bigskip}  
\numberwithin{myprectr}{section}

\usepackage[utf8]{inputenc} 
\usepackage[T1]{fontenc}    
\usepackage{hyperref}       
\usepackage{url}            
\usepackage{booktabs}       
\usepackage{amsfonts}       
\usepackage{nicefrac}       
\usepackage{microtype}      
\usepackage{lipsum}		
\usepackage{graphicx}
\usepackage{doi}
\usepackage[square,sort,comma,numbers]{natbib}

\title{The 2D Ray Tracing Problem using ABCD Lenses and Mirrors is Turing Complete}

\author{ \href{https://orcid.org/0009-0009-3654-891X}{\includegraphics[scale=0.06]{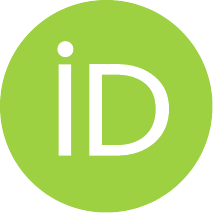}\hspace{1mm}Rosemary U. Adejoh}\\
	Faculty of Media\\
	Bauhaus University\\
	Weimar, Germany \\
	\texttt{rosemary.utenwojo.adejoh@uni-weimar.de} \\
	\And
	\href{https://orcid.org/0000-0002-9989-7801}{\includegraphics[scale=0.06]{orcid.pdf}\hspace{1mm}Andreas Jakoby} \\
	Faculty of Media\\
	Bauhaus University\\
	Weimar, Germany \\
	\texttt{andreas.jakoby@uni-weimar.de} \\
	\And
	\href{https://orcid.org/0000-0001-7881-4563}{\includegraphics[scale=0.06]{orcid.pdf}\hspace{1mm}Sneha Mohanty} \\
	Computer Networks and Telematics\\
	University of Freiburg \\
	Freiburg, Germany \\
	\texttt{mohanty@informatik.uni-freiburg.de}\\
	\And
	\href{https://orcid.org/0000-0002-8320-8581}{\includegraphics[scale=0.06]{orcid.pdf}\hspace{1mm}Christian Schindelhauer} \\
	Computer Networks and Telematics\\
	University of Freiburg \\
	Freiburg, Germany \\
	\texttt{schindel@informatik.uni-freiburg.de}}	

\renewcommand{\headeright}{Technical Report}
\renewcommand{\undertitle}{Technical Report}
\renewcommand{\shorttitle}{The 2D Ray Tracing Problem using ABCD Lenses and Mirrors}

\hypersetup{
pdftitle={A template for the arxiv style},
pdfsubject={q-bio.NC, q-bio.QM},
pdfauthor={David S.~Hippocampus, Elias D.~Striatum},
pdfkeywords={First keyword, Second keyword, More},
}

\begin{document}
\maketitle

\begin{abstract}
We establish that the two-dimensional ray tracing problem with thin lenses and plane mirrors is Turing-complete, thereby resolving an open question posed by Reif et al. in 1994 as to whether three-dimensional space is necessary for computational universality in optical systems. To this end, we consider the standard approximation of reflection and refraction, namely the ABCD model for paraxial optics, which describes ray propagation through lenses (refraction) via a $2 \times 2$ matrix, combined with the geometric reflection model for plane mirrors.

In the absence of mirrors, two-dimensional ray tracing using any combination of lenses in this ABCD matrix model can be described by a single $2 \times 2$ matrix–vector product, where the matrix has real entries and determinant~1. Conversely, we show that any such matrix with determinant~1 can be represented as a composition of exactly three appropriately spaced thin lenses.
When mirrors are combined with lenses, the ray tracing problem can be described by a flowchart using only two variables, which establishes Turing computability for rational-valued inputs, spaces and matrix entries. Building on this observation, we present a construction of ray tracing that simulates a reversible Turing machine.

We begin with a restricted version of the reversible flowchart problem, in which only two variables and certain linear functions are permitted. We prove that this restricted variant is Turing-complete. We then show that such a flowchart admits a geometric realization using lenses and mirrors in our model, thereby establishing the main result: Turing-completeness of the two-dimensional ray tracing problem with ABCD-model lenses and mirrors.
%
%
%
%
\end{abstract}


\keywords{Turing completeness, optical computation, ray tracing, ABCD matrix, thin lenses, plane mirrors, reversible Turing machine, flowchart, reversible flowchart}

\section{Introduction and Motivation}

\label{intro}

The computational power of physical systems is a topic of sustained interest in Theoretical Computer Science. Understanding which physical models can simulate universal computation not only showcases the boundaries between tractable and intractable problems but also reveals deep connections between physics and computation.
Among such models, some of the most elegant ones are based on optics, where the physics of light propagation through lenses as well as reflections from mirrors give rise to intuitively rich computational behavior. 

Here, we concentrate on the computational complexity of ray tracing in 2D optical systems, where a light ray enters a maze of lenses and mirrors and we are concerned with the question of where the ray leaves this system.
Our main tool is the ABCD ray transfer matrix, which originates from paraxial optics, describing the propagation of light rays through optical elements. 

In this model, a ray is characterized by two parameters, i.e. its height offset $y$ (the perpendicular distance from a reference axis) and its angular offset $\theta$, 
the angle relative to that axis. The propagation of a ray through any optical element—whether a thin lens or a stretch of empty space—is described by multiplication by a $2 \times 2$ real matrix: 
\begin{equation}\label{eq_matrix}
	\begin{pmatrix} y_1 \\ \theta_1 \end{pmatrix} = \begin{pmatrix} A & B \\ C & D \end{pmatrix} \cdot \begin{pmatrix} y_0 \\ \theta_0 \end{pmatrix}
\end{equation}
where the matrix has unit determinant ($AD - BC = 1$), i.e. the ABCD matrix is in the special linear group $\mathrm{SL}(2, \mathbb{R})$, which is for $2 \times 2$  matrices symplectic \cite{Bastiaans2007ClassificationOL}. This ABCD matrix formalism, introduced by Kogelnik \cite{kogelnik1965imaging} building on work of Fresnel, is a standard tool in optical engineering for designing telescopes, microscopes, and laser cavities. We demonstrate that it is also a natural framework for universal computation.

\subsection{Prior Work and Dimensionality Question}
The question of whether optical or mechanical reflection systems can perform universal computation has a rich history. In their seminal paper, Reif et al. \cite{Reif1994ComputabilityAC} proved that three-dimensional ray tracing with mirrors is Turing-complete by showing how to simulate a reversible two-counter automaton. Their construction relies heavily on parabolic mirrors in 3D space to implement the counter operations. Prior to that, Moore \cite{moore1990unpredictability} established similar results for frictionless billiard ball systems, showing that a single ball in a 3D maze of mirrors can simulate a two-counter machine.

It remains an open question whether three spatial dimensions are necessary for Turing completeness, or do two-dimensional ray tracing suffice.

Recently in \cite{adejoh_fsttcs_final}, we made progress on this question by introducing the ray particle tracing problem, a two-dimensional model involving a ray particle navigating through stationary walls, moving walls, and one-way gates. We showed that this model can simulate a two-stack pushdown automaton (equivalent to a Turing machine) by encoding one stack in the spatial offset of the ray particle's position and the other stack in the temporal offset of its arrival time. While this establishes Turing-completeness in a geometric sense, the model is not purely two-dimensional, i.e. the use of time offset as a computational resource effectively introduces a third parameter, making it a 2.5-dimensional construction. Furthermore, the reliance on moving walls and one-way gates places the model outside the domain of classical static optics.

\subsection{Our Contribution}
We resolve the dimensionality question discussed above, i.e.
we show that the Turing-completeness of the 2D ray tracing problem can be realised using only thin lenses and plane mirrors, thereby getting rid of moving walls (time offset manipulation related components) as well as one-way gates previously used in \cite{adejoh_fsttcs_final}. This means that only static optical components suffice for our constructions. Also, instead of encoding stacks in spatial offset and time offset, we encode the left and right sides of a 1-tape RTM using the height offset $y$ and the angular offset $\theta$ of the ray.  

We first investigate the computational power of thin lenses in the ABCD model and show that every matrix with determinant~1 can be realised using a constant number of lenses. By adding mirrors to the model, the ray tracing problem can be expressed as a reversible flowchart that operates on only two variables and uses solely linear operations, such as; \emph{Step}, \emph{Test}, and \emph{Assertion}. This establishes computability for rational-valued inputs.
We then show that such restricted reversible flowcharts are reversible Turing complete. This serves as the main tool for proving our central result: that 2D ray tracing is Turing-complete.

The remainder of our paper is organized as follows:
Section~\ref{rwork} presents related work on computational models, the reversible flowchart problem as well as ray/beam propagation through complex optical systems.
Section~\ref{tracing} establishes our problem definition.
Section~\ref{TC_system} introduces the abstract components required for Turing-completeness and Theorem~\ref{th:main-thm_rt}, our main result on the Turing-completeness of the 2D ray tracing problem, as well as the Turing-completeness of the related two-variable linear reversible flowchart.
The proof for this flowchart problem is given in Section~\ref{sec:Red-2-LRFC}.
Section~\ref{abcd} develops the ABCD matrix formalism and our proof that every ABCD matrix can be constructed with a constant number of thin lenses of the same focal length $f > 0$.
Finally, Section~\ref{lens-tracing} combines all components and elaborates on the main result on the Turing-completeness of 2D ray tracing.
Section~\ref{conclusions} contains the conclusion and open problems.
The appendix contains all omitted proofs for Section~\ref{abcd} as well as some figures moved there due to space limitations.

\section{Related Work}
\label{rwork}

The theory of reversible computation plays an important role in this paper, which was investigated by
Toffoli in \cite{toffoli} based on invertible primitives and invertibility preserving composition rules. Given such constraints, a closer correspondence between the behavior of abstract computing systems and underlying microscopic physical laws can be attained. One of the tools we use in our main result are reversible flowcharts, the fundamentals of  which are presented by 
Yokoyama, Axelsen and Glück in \cite{ReversibleFlowchart}. They show that reversible flowcharts, in its structured and unstructured form are reversible Turing complete, implying that they can compute exactly all injective computable functions. In \cite{AxelsenBock2011}, Axelsen and Glück discuss that Reversible computing is the study of computation models
that illustrate both forward and backward determinism. They stress that the property of reversibility should be the starting point of
a computational theory of reversible computing. They show that the RTMs can compute exactly all injective, computable functions and that r-Turing completeness is
the gold standard for computability in reversible computation models.
 Chardonnet, Lemonnier and Valiron in \cite{chardonnet_2024} show how one can encode a reversible Turing machine using the Theseus language. They build a robust categorical semantics based on join inverse
categories and present additional structures to capture pattern-matching and to interpret inductive types
as well as recursion.

The work that is closest to our findings is by Miranda and Ramos \cite{miranda2025}, who show that two-dimensional billiard systems are Turing complete by encoding their dynamics within the framework of Topological Kleene Field Theory. They construct a smooth billiard wall with a fractal zig--zag profile. Their results establish the existence of undecidable trajectories in physically natural two-dimensional billiard-type models. González-Prieto, Miranda and Peralta-Salas \cite{González-Prieto_2025} present a survey in which they review recent work introducing novel perspectives on the representation of computability through dynamical systems, ranging from classical dynamical universality to modern notions of Turing universality in fluid dynamics and Topological Kleene field theories.

While the ABCD model is only an approximation, it is still relevant in modern optical physics.
In \cite{Bastiaans2007ClassificationOL}, Bastiaans and Alieva propose a classification of first-order optical systems based on the eigenvalues of the ray transformation (ABCD) matrix. They present a classification of 1 and 2-dimensional real symplectic matrices based on the distribution of eigenvalues and on if the matrix can be diagonalized. 
In \cite{theodore_abcd}, Corcovilos presents an improved $3 \times 3$ matrix method for calculating paraxial ray traces of optical systems that is applicable to arrangements on an optical table, including lenses and mirrors in arbitrary orientation and position. Yura and Hanson \cite{Yura:87} describe a formulation of light beam propagation through complex optical systems, including effects such as finite apertures, random jitter, tilt as well as atmospheric turbulence. 



\section{Problem Definition}\label{tracing}

We are concerned with the ray tracing problem as depicted in Figure~\ref{f:problem-task}. 
\begin{definition}[2D ray tracing problem with ABCD lenses and mirrors]
Given a set of mirrors and lenses in two dimensions and a finishing line, the task is to determine whether a ray, given by its initial position and direction, passes the finishing line after being refracted at the lenses according to the ABCD model and reflected at mirrors according to the standard physical law of reflection.
\end{definition}

The law of reflection states that the angle of incidence equals the angle of reflection at the point of contact of the light ray with the mirror.
\begin{figure}
\begin{minipage}[b]{0.52\textwidth}
	\includegraphics[width=1\textwidth]{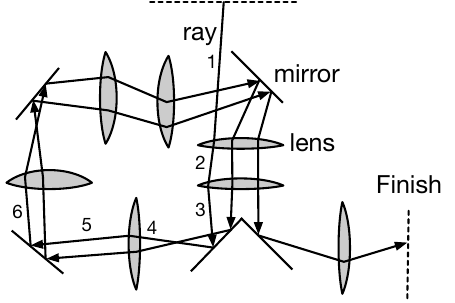}
	\caption{The 2D Ray Tracing problem with lenses and mirrors. The transformations steps of the ray are numbered. The remaining steps follow  a similar sequence.\label{f:problem-task}}
	\end{minipage}
\hfill
\begin{minipage}[b]{0.4\textwidth}
\includegraphics[width=1\textwidth]{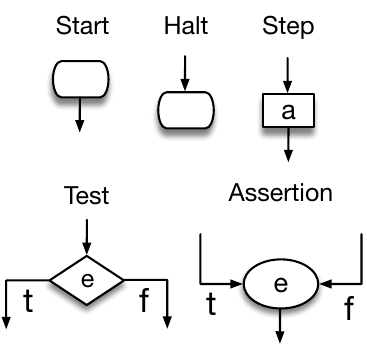}
\caption{The Reversible Flow Chart Components\label{flowchart}}
\end{minipage}
\end{figure}
%
%
According to the paraxial optics model, we specify the ray by its initial height offset $y_0 \in \mathbb{Q}$ and angular offset $\theta_0 \in \mathbb{Q}$ relative to the optical axis. Note that free space itself influences these parameters, as we will discuss later in Section~\ref{abcd}.

%
	
For our investigation, we use two reversible computation models. The first is the reversible flowchart model~\cite{ReversibleFlowchart}. We use this model to establish Turing computability of all parameters (of the initial ray, the lenses and the mirrors), assuming they are given as rational numbers.
%
%
Secondly, for the Turing hardness proof we use the reversible Turing machine in quadruple form as discussed in~\cite{morita2017reversiblecomputing}.

\section{Turing Complete Systems}
\label{TC_system}

Recall that a computational model is Turing-complete if it can simulate any given Turing machine. Well-known Turing-complete models include reversible Turing machines, two-stack pushdown automata, and two-counter automata.

In \cite{adejoh_fsttcs_final} we have shown that the Pinball Wizard problem is Turing-complete. Our simulation uses a network of gadgets which manipulates the arrival time and the arrival position of a ball within a pinball machine and by this we show the Turing-completeness of this problem. 

We extend this approach by emphasizing the close relationship to the reversible flowchart \cite{ReversibleFlowchart}.
 {\em  A reversible flowchart $F$} is a finite directed graph with three standard types of nodes: {\em Step}, {\em Test} and {\em Assertion}. Here, we have added the single {\em Start} node and a {\em Halt} node, indicating the begin and the end of the calculation, see Figure~\ref{flowchart}.

According to this figure the in- and out-degrees of these nodes are fixed. A further requirement is that the calculation done in a {\em Step} node, denoted as $a$, must be  invertible, denoted by $a^{-1}$.  In a {\em Test} node, a condition $e$ is applied in order to determine, which path the calculation takes after this node. 

The introduction of the {\em Assertion} (Join) node is a clever mechanism to ensure invertibility (backward determinism). The designer of a reversible flowchart must ensure that any calculation entering the Assertion node on the input marked as {\tt true} fulfills the condition given by $e$, while any calculation entering at {\tt false} fulfills the negation $\neg e$. 

Similarly, the inverted flowchart can be constructed by inverting the directions of all edges, exchanging Start and Halt nodes, replacing Test with Assertion nodes (and vice versa), and finally replacing the calculation $a$ on the Step nodes with $a^{-1}$, e.g. the reversal of Figure~\ref{f:leftheadmovement} into Figure \ref{f:rightheadmovement}.

In \cite{ReversibleFlowchart} it has been established that reversible flowcharts with a very basic command set for the Step operations can compute all reversible computations. Here, we use the unstructured reversible flowchart definition, which has been shown to be as expressive as the structured version according to \cite{ReversibleFlowchart}.

For analyzing ray tracing we introduce a reduced version of this model.
\begin{definition}
The 2-Variable Linear Reversible Flowchart [2-LRFC] is a reversible flowchart with the following restrictions
\begin{enumerate}
\item There are only two variables $y,\theta  \in \Ratio$.
\item In each step the calculations $a$ is a linear transformation, i.e.
$$ \begin{pmatrix} y \\ \theta \end{pmatrix} \leftarrow \begin{pmatrix} A & B  \\ C & D \end{pmatrix}  \begin{pmatrix} y \\ \theta \end{pmatrix} +  \begin{pmatrix} c_1 \\  c_2 \end{pmatrix} \ , $$
where the matrix is invertible, i.e. 
($AD - BC \neq 0$), and $A,B,C,D,c_1,c_2 \in \Ratio$.
\item The condition $e$ on the Test and Assertions nodes is linear, i.e. 
$ a y + b \theta  \leq c $, where $a,b,c \in \Ratio$.
\end{enumerate}
\end{definition}

Clearly, this model describes only reversible computable functions as the original reversible flowchart model.  We show that this model is reversible Turing-complete by simulating a reversible Turing machine in Section~\ref{sec:Red-2-LRFC}.
%
%
%
\begin{theorem}\label{t:2LRC-Complete}
2-LRFC is reversible Turing-Complete
\end{theorem}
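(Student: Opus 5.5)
We simulate an arbitrary reversible Turing machine (RTM) in quadruple form~\cite{morita2017reversiblecomputing} by a 2-LRFC, as announced in the introduction. The tape contents left and right of the head are encoded in the two variables. Take a tape alphabet $\Gamma=\{0,\dots,k-1\}$ with blank $0$, and fix a base $b=k+1$ (or $2k$, to leave gaps). The left half-tape $a_1a_2\dots$, read from the head outward, is encoded as $y=\sum_i (a_i+1)\, b^{-i}$, and the right half-tape including the scanned cell as $\theta=\sum_i (a_i+1)\, b^{-i}$. Finitely many nonblank cells make these rationals. Using digit values $a_i+1\ge 1$ and an empty-tape convention, e.g.\ a terminating digit encoding of the blank tail as $1/(b-1)$ or a scaled variant, makes each symbol readable by linear tests. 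Concretely, the scanned symbol $s$ is determined by $\theta\in[(s+1)/b,(s+2)/b)$. The finite control is represented by the nodes of the flowchart: each RTM state $q$ becomes a junction point.

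The quadruple types are simulated as follows.
\begin{enumerate}
\item \emph{Read/write} $[q,s,s',q']$: a Test cascade of linear conditions $\theta\le c$ identifies $s$. A Step then applies $\theta\leftarrow\theta+(s'-s)/b$, an invertible affine map.
\item \emph{Shift right}: pop the leading digit $d$ of $\theta$ and push it on $y$. Since $d$ is not a linear function of $\theta$, we first branch by Tests on the interval containing $\theta$. On the branch for digit $d$ we apply the affine map
$$\tdv{y}{\theta}\leftarrow \tdm{1/b}{0}{0}{b}\tdv{y}{\theta}+\tdv{(d+1)/b}{-(d+1)} ,$$
which has nonzero determinant. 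The $k$ branches are then merged by a cascade of Assertion nodes whose conditions, on the now-updated $y$, identify the pushed digit $d$. The branches are thus disjoint and the merge is legitimate.
\item \emph{Shift left}: the mirror image of the previous construction, obtained by the inversion of flowcharts described above.
\end{enumerate}

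Reversibility of the RTM guarantees that at most one quadruple enters each state $q'$ for each relevant configuration pattern. The convergence of several incoming edges into a state node is therefore realized by Assertion nodes testing the distinguishing condition: the just-written symbol in the read/write case, or the just-moved digit in the shift case. These conditions are again linear in $y$ or $\theta$. Finally, a Start node sets up the encoding. Since inputs are given directly as $(y,\theta)$, the initial tape is encoded in them. A Halt node follows the halting state.

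The main obstacle is making every Assertion merge sound. When entering a state $q'$ from several predecessors, one must check that the predecessor can be recovered by a linear test on the current $(y,\theta)$. For instance, for two write-quadruples into $q'$, the RTM's reversibility condition says they write different symbols. For a shift into $q'$ there must be no other quadruple into $q'$ at all. Using Morita's normal form, where each state is entered either only by shifts or only by writes, I expect to resolve this with a case analysis. The secondary technical point is the blank-tail convention: the digit intervals must stay disjoint and closed-form, so I would choose the base with gaps, e.g.\ digits $2a_i+1$ in base $2k+1$, so that all tests are strict separations by rational thresholds.
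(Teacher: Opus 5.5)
Your proposal is correct and follows the same overall plan as the paper. The two variables carry the two halves of the tape as base-$b$ expansions, which are rational for finite tapes, and the blank tail is preserved by the pop/push maps. A computing quadruple becomes a Test tree on the scanned digit followed by an additive write. A head move pops the leading digit of one variable and pushes it onto the other, and the opposite move is the inverted flowchart. Branches are rejoined by Assertions on the freshly written or pushed digit.

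The difference is the instruction set. You use the full freedom of 2-LRFC: arbitrary rational thresholds such as $\theta\le c$, translations by $(s'-s)/b$, a single scaling $\mathrm{diag}(1/b,b)$ plus translation, and an arbitrary alphabet. So each read, write and pop/push is one operation. The paper instead fixes $\Sigma=\{0,1,\beta\}$ in base $4$ and proves the stronger claim that Red-2-LRFC suffices. There only the test $y<1$, shifts by $\pm1$, $M_\text{mult}$ and $M_\text{switch}$ are available. A digit is therefore read bit by bit through multiply/test/subtract, division is synthesized as $M_\text{switch}M_\text{mult}M_\text{switch}^3$, and digits move between variables via the switch.

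For the present theorem that restriction is unnecessary. It is, however, exactly what Section~\ref{lens-tracing} needs, since a mirror can only compare one coordinate with a fixed threshold and only those few matrices are built from lenses. Your route gains brevity and treats the merges explicitly.

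On the merges you are right to use Morita's form of backward determinism. Two quadruples entering the same state must both be read/write quadruples with distinct written symbols, so a state entered by a move has a unique predecessor. This is what the construction actually needs, and it is stronger than the condition literally written in this paper's definition. The $d=0$ move you omit is just a plain edge.
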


For this we give an explicit construction of a reversible Turing machine (RTM), where the Step operations and Test/Assertion conditions are even further reduced. This reduced definition reflects the fact that for ABCD ray tracing the determinant fulfills $AD-BC=1$ and that mirrors can facilitate  Test and Assertion, if we only test for the first variable $y$. 
\begin{definition}\label{def:red2LRFC}
The Reduced 2-Variable Linear Reversible Flowchart [Red-2-LRFC] is a 2-LRFC with the following restrictions. We only use condition $y<1$ for the Tests and Assertions and the Step nodes facilitate only one of these calculations:
\begin{enumerate}
\item Shifting up by one unit: 
 $ \begin{pmatrix} y \\ \theta \end{pmatrix} =  \begin{pmatrix} y \\ \theta \end{pmatrix} +  \begin{pmatrix} 1\\ 0 \end{pmatrix} \  $
\item Shifting down by one unit: 
 $ \begin{pmatrix} y \\ \theta \end{pmatrix} =  \begin{pmatrix} y \\ \theta \end{pmatrix} - \begin{pmatrix} 1\\ 0 \end{pmatrix} \  $
	\item {\em Multiplication}: $ \begin{pmatrix} y \\ \theta \end{pmatrix} = M_\text{mult}  \begin{pmatrix} y \\ \theta \end{pmatrix}$, for $ M_\text{mult} =  \begin{pmatrix} 2 & 0 \\ 0 & \frac{1}{2} \end{pmatrix}$
	\item {\em Switch}:$ \begin{pmatrix}
	 y \\ \theta \end{pmatrix} =  M_\text{switch} \begin{pmatrix} y \\ \theta \end{pmatrix}  $, for $ M_\text{switch} =  \begin{pmatrix} 0 & 1 \\ -1 & 0 \end{pmatrix}$
	\end{enumerate}
	\end{definition}

We show that Red-2-LRFC is reversible Turing complete by giving an explicit construction of a simulation of a reversible Turing Machine (RTM) in Section~\ref{sec:Red-2-LRFC}, which then implies Theorem~\ref{t:2LRC-Complete}.

Returning to the ray tracing problem in two-dimensional paraxial optics, the variable $y$ denotes the spatial offset from the optical axis of the lenses, and $\theta$ denotes the angle of the ray. Neither $M_{\text{switch}}$ nor $M_{\text{mult}}$ can be directly realized by a single interaction of a ray with a lens or a mirror. Furthermore, while the ray travels between these elements, its offset $y$ changes, while $\theta$ remains constant. To familiarize the reader with the basics of this ABCD optical theory, we provide a short introduction in Section~\ref{abcd}, where we also discuss which matrix operations can be realized by a constant sequence of lenses without mirrors.

This enables us, in Section~\ref{lens-tracing}, to give an explicit construction of lenses and mirrors that transforms the reversible flowchart simulating an RTM from Section~\ref{sec:Red-2-LRFC} into an instance of the two-dimensional ray tracing problem. For this, we place Test and Assertion operations at the nodes of a carefully spaced grid. We then show how the four step operations, together with a connection element implementing the identity function, can be embedded along the edges of this grid. This establishes our main result.

%
%

\begin{theorem}\label{th:main-thm_rt}
	The two-dimensional ray tracing problem with ABCD lenses and plane mirrors is Turing-complete. 
\end{theorem}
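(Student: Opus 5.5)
The proof is a reduction. By Theorem~\ref{t:2LRC-Complete}, via the Red-2-LRFC simulation of a reversible Turing machine announced for Section~\ref{sec:Red-2-LRFC}, every RTM $T$ is simulated by a Red-2-LRFC $F_T$. This flowchart uses only five ingredients:
\begin{itemize}
\item the unit shifts $y\mapsto y\pm 1$,
\item $M_\text{mult}=\mathrm{diag}(2,\frac12)$,
\item $M_\text{switch}$,
\item Tests and Assertions on the single condition $y<1$,
\item unique Start and Halt nodes.
\end{itemize}
It therefore suffices to build, for each ingredient, a static gadget of thin lenses and plane mirrors that realizes it exactly on the ray parameters $(y,\theta)$. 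We then wire the gadgets along the graph of $F_T$ so that the ray reaches the finishing line (placed at the Halt gadget) if and only if $T$ halts on the encoded input. Since halting of an RTM is undecidable, this gives Theorem~\ref{th:main-thm_rt}.

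\textbf{Matrix steps.} The matrices $M_\text{mult}$ and $M_\text{switch}$ both have determinant $1$. I would use the result of Section~\ref{abcd} that every matrix in $\mathrm{SL}(2,\mathbb{R})$ is realized by three suitably spaced thin lenses, with the appropriate free-space propagation matrices interleaved. I would check that for these two matrices the focal lengths and spacings can be chosen rational, or at least consistent with a fixed grid pitch.

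\textbf{Shifts.} The affine shift $y\mapsto y\pm1$ is not linear and cannot come from a lens, so I would realize it with mirrors. A plane mirror reflection is an isometry, so it maps rays to rays. Two parallel mirrors at $45^\circ$, forming a periscope, translate the optical axis laterally by a fixed distance. Measured relative to the next segment's reference axis, this adds a constant to $y$ and leaves $\theta$ unchanged. Choosing the axis offset equal to the unit gives the shift.

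\textbf{Tests and Assertions.} A Test on $y<1$ is a half-mirror, i.e. a plane mirror occupying the half-line $y\ge 1$ of the cross section. Rays with $y\ge1$ are reflected into the false branch; rays with $y<1$ pass to the true branch. An Assertion is the same gadget used in reverse: by the promise on Assertion inputs, the two incoming beams arrive on complementary sides of the mirror edge and merge into one outgoing axis. I would place these at nodes of a grid, with each axis reflection being $90^\circ$ and $\theta$ measured relative to the local axis. A reflection flips the sign convention for $\theta$ (or for $y$). This must be bookkept and, where needed, compensated by composing with lens realizations of $-I$ or $\mathrm{diag}(1,-1)$-type corrections that keep determinant $1$, or by pairing reflections.

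\textbf{Connections.} Edges of the grid need identity connectors. Free space of length $d$ acts as $\begin{pmatrix}1&d\\0&1\end{pmatrix}$, not the identity, so I would append its inverse via the three-lens construction, or use a $4f$-type imaging relay that realizes $I$ exactly.

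\textbf{Main obstacle.} I expect the hardest part to be that the paraxial model is applied globally while $y$ and $\theta$ are unbounded during the computation. Three problems follow:
\begin{itemize}
\item rays with large $|y|$ may miss a finite lens or mirror;
\item large $\theta$ may cause the ray to hit an unintended component;
\item gadgets must not intersect in the plane.
\end{itemize}
I would handle these by stipulating, as the ABCD model does, that lenses and mirrors act as infinite lines along their own axis segments. I would place gadgets on a sufficiently spaced grid and argue that each gadget's elements are encountered only by rays travelling on its own segment, since the ray always propagates along the designated local axis direction in the model. A second delicate point is that the Test threshold $y<1$ must be measured in the local frame. The Red-2-LRFC tape encoding must therefore keep the values tested at mirror edges exact and rational, which the reduction in Section~\ref{sec:Red-2-LRFC} is designed to provide.
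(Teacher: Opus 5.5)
\textbf{Gap: the geometric embedding.} Your reduction has the same architecture as the paper's: RTM $\to$ Red-2-LRFC, lens systems on grid edges for $M_\text{mult}$, $M_\text{switch}$ and an identity wire, periscopes for the shifts, and $45^\circ$ half-mirrors at grid nodes for Test and Assertion. The step you single out as the main obstacle is where the flowchart becomes an actual geometric instance, and there both your premise and your fix are wrong.

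\emph{The premise is false.} With the encoding of Section~\ref{sec:Red-2-LRFC} (base-$4$ digits in $\{0,1,2\}$), $y$ and $\theta$ stay in $(-2,2)$ throughout. This includes the intermediate values produced by the decision trees. By the paper's construction they also never take integer values, so a ray never strikes a mirror edge exactly.

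\emph{The fix is a change of model, not an argument.} Your fix is to declare lenses and mirrors infinite lines that act only on their own axis segment, with the ray propagating along its designated axis. But in the 2D ray tracing problem a ray is a half-line that interacts with every lens or mirror segment it meets; it has no notion of its own segment. An infinite lens line on one edge also crosses every parallel edge and would deflect those rays by $-y/f$, with $y$ measured from the wrong axis. An infinite $45^\circ$ mirror line at one node cuts through the whole layout. And if the offsets really were unbounded, no fixed grid spacing could keep the ray out of neighbouring gadgets. As written, you do not obtain a valid instance.

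\textbf{What is needed.} The paper uses the boundedness directly. Inside each fixed gadget, the state at every plane is a fixed matrix applied to a bounded input, so the beam stays in a corridor of bounded width and all lenses and mirrors can be finite segments. The paper then rescales by a rational $\epsilon$ (e.g.\ $\epsilon=1/100$): Test and Assertion become $y<\epsilon$, and the shifts become $y\mapsto y\pm\epsilon$. The steps $M_\text{mult}$, $M_\text{switch}$ and the wire are unchanged because they are linear. This keeps parallel edges from interfering and keeps the angles small.

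Your ``I would check'' on rationality and grid pitch is discharged in the paper by explicit systems with focal length $1$:
$M_\text{mult}=S_1T_1S_{5/2}T_1S_4T_1S_{3/2}$,
$M_\text{switch}=S_3T_1S_3T_1S_4T_1S_2$, and
$M_\text{wire}=S_1T_1S_3T_1S_3T_1S_2$.
These are folded by periscopes to fit grid size $8$.

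\textbf{Two smaller slips.}
\begin{itemize}
\item $\mathrm{diag}(1,-1)$ has determinant $-1$ and cannot come from lenses. You do not need it: with a consistent left-positive convention, a $45^\circ$ reflection through the node acts as $-I$.
\item A single $4f$ relay $S_fT_fS_{2f}T_fS_f$ realizes $-I$, not $I$. You need two of them, or the three-lens wire.
\end{itemize}
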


\section{Turing Hardness of the Two-Variable Linear Reversible Flowchart}
\label{sec:Red-2-LRFC}

For our simulation, we use the quadruple form of a reversible Turing machine (RTM) as discussed in~\cite{morita2017reversiblecomputing}.

\begin{definition}[Reversible Turing Machine]
A one-tape reversible Turing machine in quadruple form is defined by
$M=(Q,\Sigma,q_0,q_f,\delta)$, where
$Q$ is a finite set of states, $\Sigma=\{0,1,\beta\}$ is the tape alphabet, $q_0 \in Q$ is the initial state, and $q_f \in Q$ is the final state.
The transition relation is given by
\[
\delta \subseteq (Q \times \Sigma \times \Sigma \times Q)\ \cup\ (Q \times \{\emptysymbol\} \times \{-1,0,1\} \times Q).
\]
A transition is either:
\begin{itemize}
    \item a \textbf{computing transition}, i.e., a transition $[p, x, z, q]$ with states $p,q \in Q$ and symbols $x,z \in \Sigma$. In this case, the head does not move; it reads symbol $x$ and writes symbol $z$.
    \item a \textbf{head-moving transition}, i.e., a transition $[p, \emptysymbol, d, q]$ with states $p,q \in Q$, a special symbol $\emptysymbol \notin \Sigma$ (indicating that no symbol is read), and a movement direction $d \in \{-1,0,+1\}$. In this case, the head moves left ($-1$), right ($+1$), or stays in place ($0$), and the tape remains unchanged.
\end{itemize}
To ensure reversibility, the transition relation $\delta$ must be injective in both forward and backward directions. That is, for any two distinct transitions 
$[p_1, x_1, y_1, q_1] \in \delta$ and $[p_2, x_2, y_2, q_2] \in \delta$, the following conditions hold:
\begin{itemize}
    \item \textbf{Forward determinism:} if $p_1 = p_2$, then $x_1 \neq x_2$.
    \item \textbf{Backward determinism:} if $q_1 = q_2$, then $y_1 \neq y_2$.
\end{itemize}
\end{definition}
We assume that within a computation the RTM always alternates from step to step between computing and head moving transition. 

In the following construction, we ensure that the two flowchart variables are always bounded, i.e., $\theta, y \in \Ratio \cap (-2,2)$. This is motivated by the limited size of the lenses used later on. Furthermore, we avoid integer values, i.e., $\theta, y \notin \Integer$, since this prevents rays from hitting the boundaries of mirrors.

We now describe how to construct a reduced 2-variable linear reversible flowchart for a given reversible Turing machine $M$. The initial and final states are represented by the Start node and Halt node of the flowchart, respectively. For the remaining states of the Turing machine, we introduce $|Q|-2$ state nodes in the flowchart; see Figure~\ref{overview}.

\begin{figure}
\begin{minipage}[b]{0.4\textwidth}
\includegraphics[width=\textwidth]{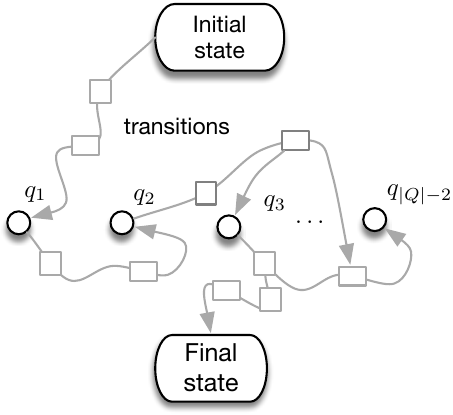}
\caption{State nodes and transitions of the reversible flowchart simulating an RTM. \label{overview}}
\end{minipage}\hfill
\begin{minipage}[b]{0.4\textwidth}
\includegraphics[width=\textwidth]{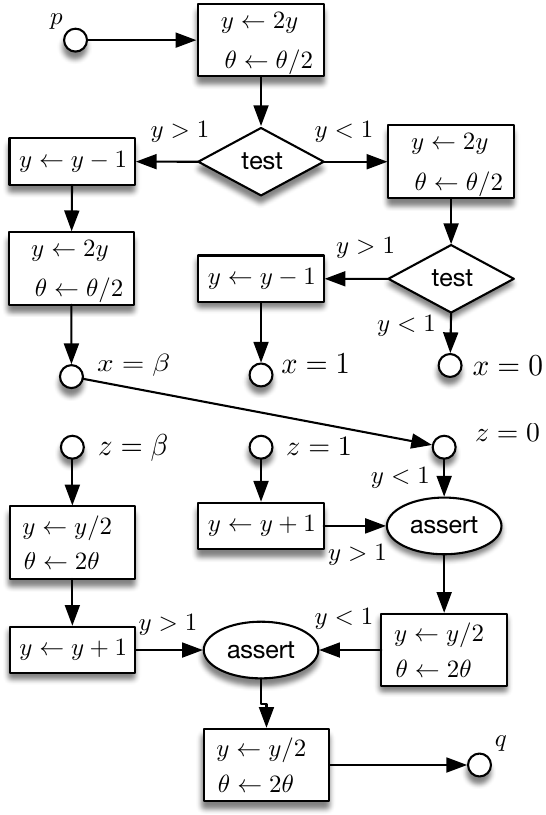}
\caption{Reversible flowchart for the computation transition $(p,x,z,q)$ for $x=\beta$ and $z=0$. \label{f:computationtransition}}
\end{minipage}
\end{figure}

The tape content and head position are encoded by the two variables $y$ and $\theta$, which assume consistent values upon entering each state node. The left part of the tape (relative to the head position) is encoded by $\theta$, while the symbol under the head together with the right part of the tape is encoded by $y$. Let
\[
\ldots s_{-3}\, s_{-2}\, s_{-1}\, s_0\, s_1\, s_2 \ldots
\]
be the infinite tape in some state $q$, where $s_0$ denotes the symbol under the head. Positive indices correspond to the right side of the tape, and negative indices to the left side. We encode each symbol $s \in \{0,1,\beta\}$ by an integer from $\{0,1,2\}$ via $v(0)=0$, $v(1)=1$, and $v(\beta)=2$. The tape is then represented as follows:
\begin{eqnarray}
y & = & \sum_{i=0}^{\infty} v(s_i)\cdot 4^{-i-1},\\
\theta & = & \sum_{i=1}^{\infty} v(s_{-i})\cdot 4^{-i}.
\end{eqnarray}

Note that an empty tape $\ldots \beta \beta \beta \ldots$ corresponds to the constant values $y=\theta=\frac{2}{3}$. In particular, neither $y$ nor $\theta$ ever attains an integer value. Using this encoding, we can realize the following elementary operations:
\begin{enumerate}
    \item Reading the symbol $s_0$ under the head: $v(s_0) = \lfloor 4y \rfloor$.
    \item Reading the symbol $s_{-1}$ to the left of the head: $v(s_{-1}) = \lfloor 4\theta \rfloor$.
    \item Replacing the symbol $s_0$ by $s_1$ at the head position:
    \(
    y \leftarrow y + \frac{v(s_1) - v(s_0)}{4}.
    \)
    \item Moving the head to the right:
    \begin{eqnarray}
        y &\leftarrow& 4y - v(s_0),\\
        \theta &\leftarrow& \frac{\theta + v(s_0)}{4}.
    \end{eqnarray}
    \item Moving the head to the left:
    \begin{eqnarray}
        y &\leftarrow& \frac{y + v(s_{-1})}{4},\\
        \theta &\leftarrow& 4\theta - v(s_{-1}).
    \end{eqnarray}
\end{enumerate}

We connect the state nodes of the reversible flowchart using components that implement the transitions given by $\delta$ of the RTM.

Given a {\em computation transition} $[p, x, z, q]$, we construct a sub-flowchart connecting state node $p$ to state node $q$. Let $v(x) = b_0 + 2b_1$ and $v(z) = b'_0 + 2b'_1$, where $b_i, b'_i \in \{0,1\}$.

We build a depth-two decision tree. First, we apply a multiplication step to test the most significant bit $b_1$ of $v(x)$. If $b_1=1$, we remove it by a suitable shift. We then test for $b_0$ in the second layer. After these steps, the symbol $x$ is effectively erased from the head position.

To write the new symbol $z$, we use the division operation
\[
M_{\text{div}} := M_{\text{switch}} \cdot M_{\text{mult}} \cdot M_{\text{switch}}^3
= M_{\text{mult}}^{-1}
= \begin{pmatrix}
\frac{1}{2} & 0 \\ 0 & 2
\end{pmatrix},
\]
which can be realized by a constant sequence of operations. We first add $1$ to $y$ if $b'_0=1$, then apply the division, add $b'_1$ to $y$, and finally apply another division; see Figure~\ref{f:computationtransition}.

For a {\em head-moving transition} $[p, \emptysymbol, d, q]$, if $d=0$, we simply connect state node $p$ to $q$.

If $d=-1$, we move the head to the left. To do so, we extract the symbol $s_{-1}$ from the left side of the tape. We first apply a switch operation exchanging $\theta$ and $y$, and then construct a decision tree as above to determine $v(s_{-1}) = b_0 + 2b_1$ and remove it from $y$. We then apply the inverse switch operation, which can be realized as
\[
M_{\text{switch}}^{-1} = M_{\text{switch}}^3 =
\begin{pmatrix}
0 & -1 \\ 1 & 0
\end{pmatrix}.
\]
Finally, we add the extracted bits to $y$; see Figure~\ref{f:leftheadmovement}.

The remaining case $d=1$ corresponds to a right move of the head and is obtained by inverting the construction for the left move; see Figure~\ref{f:rightheadmovement}.

This completes the construction and proves Theorem~\ref{t:2LRC-Complete}, since we simulate an RTM by a reduced 2-variable linear reversible flowchart.

\begin{figure}
\begin{minipage}[b]{0.48\textwidth}
\includegraphics[width=\textwidth]{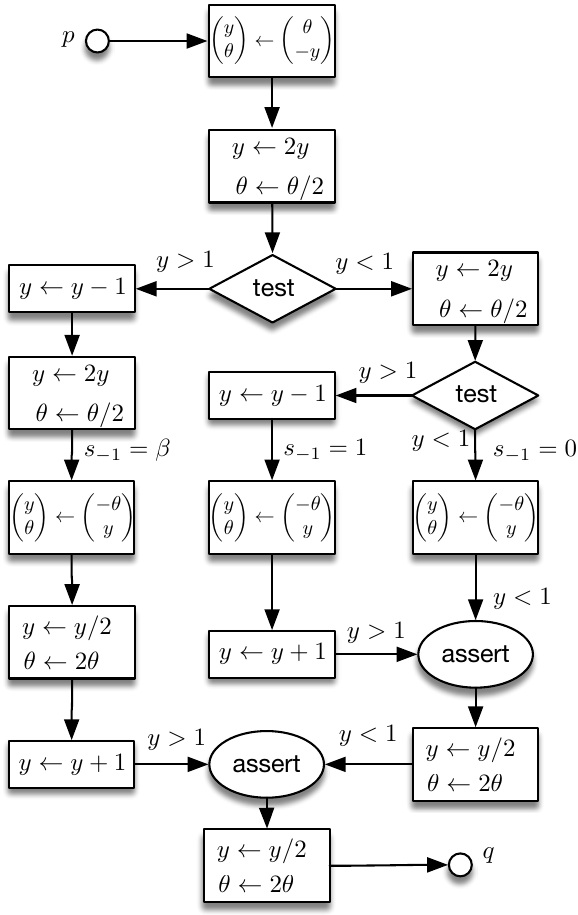}
\caption{Left head movement $[p,\emptysymbol,-1,q]$\label{f:leftheadmovement}}
\end{minipage}
\hfill
\begin{minipage}[b]{0.49\textwidth}
\includegraphics[width=\textwidth]{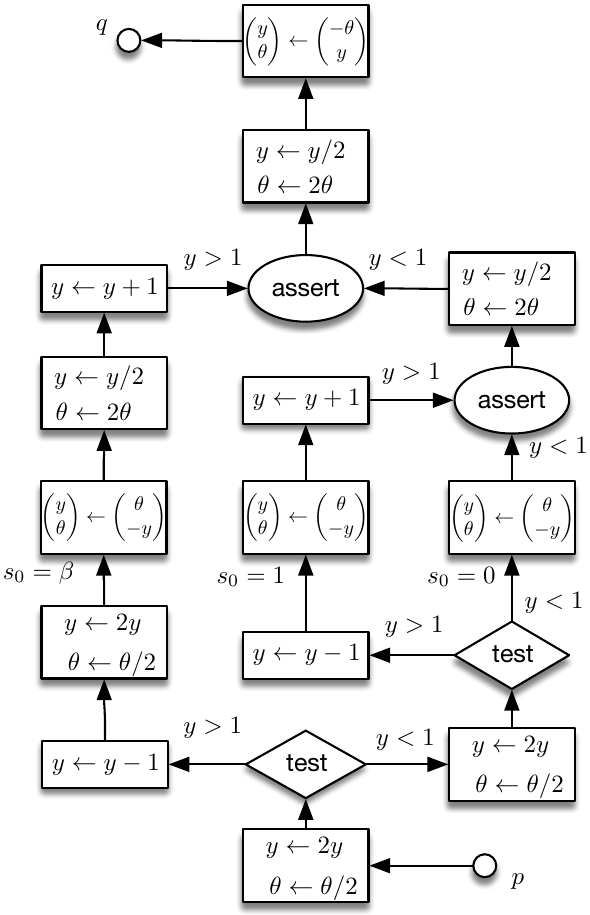}
\caption{Right head movement $[p,\emptysymbol,+1,q]$\label{f:rightheadmovement}}
\end{minipage}
\end{figure}

	%
	%
	%
	%
	%
	%





\section{The ABCD Ray Transfer Model}\label{abcd}

We give a brief introduction to the relevant aspects of the ABCD model in optics, which is used to describe lenses and curved mirrors in two dimensions. For a more detailed treatment, we refer to standard optics textbooks such as~\cite{peatross2015physics}. 

In this model, a reference $x$-axis is fixed along which optical elements, such as lenses or curved mirrors, are placed. At a position $x_0$ on this axis, a light ray is characterized by its offset $y_0$ (the perpendicular distance from the axis) and its angle $\theta_0$ relative to the axis. This state is represented by the vector $(y_0,\theta_0)$.
As the ray propagates through the system, it is refracted by lenses, reflected by mirrors, and travels along straight lines between these elements. At a later position $x_1$, we observe the exiting ray and measure its offset and angle, denoted by $(y_1,\theta_1)$, again with respect to the $x$-axis.

Let us first consider propagation through a region of free space of length $d$, so that $x_1 = x_0 + d$. In this case, the angle remains unchanged, i.e., $\theta_1 = \theta_0$, while the offset satisfies $y_1 = y_0 + d \sin \theta_0$. 
The ABCD model assumes small angles, allowing the approximation $\sin \theta_0 \approx \theta_0$. Using this approximation, propagation through free space can be described by a \emph{space ABCD matrix} $S_d$, which acts via matrix multiplication; see Figure~\ref{f:only-lenses}.
\begin{equation}
\begin{pmatrix} y_1 \\ \theta_1 \end{pmatrix}
=
\underbrace{\begin{pmatrix} 1 & d \\ 0 & 1 \end{pmatrix}}_{S_d}
\cdot \begin{pmatrix} y_0 \\ \theta_0 \end{pmatrix}
\end{equation}

 \begin{figure}[htb]
 \centering
 \begin{minipage}[b]{0.4\textwidth}
	\begin{center}
		\scalebox{.19}{
			\input{fig/LensScapes_ThinLens_01.pspdftex}
		}
	\end{center}
	\caption{The geometry of a thin lens. Note that we assume $d\rightarrow 0$.		\label{fig:ThinLens_02}}
\end{minipage} 
\begin{minipage}[b]{0.59\textwidth}
	\begin{center}
	\scalebox{.78}{
			\includegraphics[width=1\linewidth]{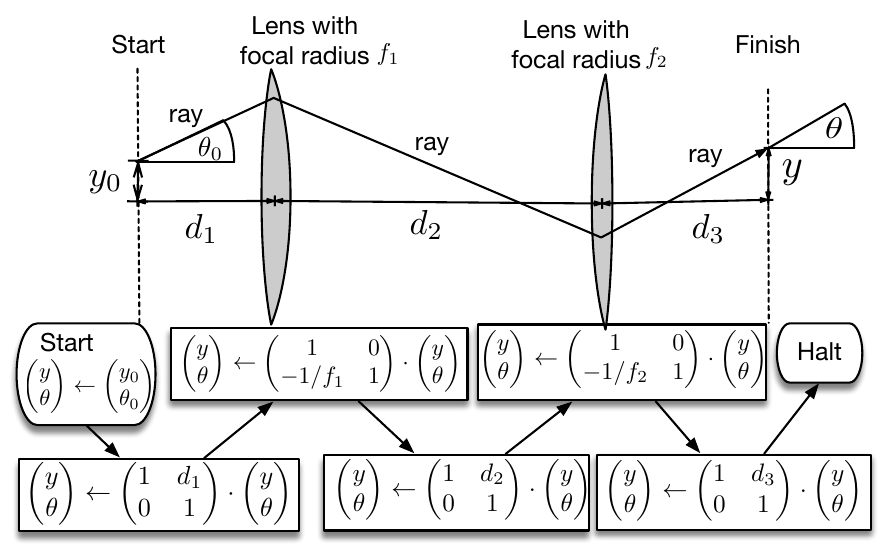}
		}
	\end{center}
\caption{The computation induced by two lenses with focal points $f_1$ and $f_2$  as flowchart. \label{f:only-lenses}}
\end{minipage} 
\end{figure}

Without going into further detail similar approximating matrices can be derived for thin lenses, thick lenses, concave and convex mirrors, which all can be described by Equation~(\ref{eq_matrix}),
	where $A,B,C,D \in \Real$. 
In all cases, the resulting matrices have unit determinant, i.e.,
\begin{equation}
	\begin{vmatrix} A & B \\ C & D \end{vmatrix} = 1 \ .
\end{equation}
Thus, in addition to the space matrix $S_d$, the \emph{thin lens ABCD matrix} $T_f$ is relevant for our considerations.
%
\begin{equation}T_f = \begin{pmatrix} 1 & 0 \\ -\frac{1}{f} & 1 \end{pmatrix}  \qquad \text{where} \qquad \frac{1}{f} = (n-1)\cdot\left(\frac{1}{R_1}-\frac{1}{R_2}\right)\ .
\end{equation}
The value $f\in \Real\setminus\{0\}$ describes the {\it focal length} according to the {\it lens maker's formula}. We assume that the diameter of this lens can be neglected. Here, $n$ describes the refractive index of the glass and we assume that the refractive of air to be $1$. Note that any value $f \in \Real \setminus\{0\}$ can be achieved by an appropriate choice of $R_1$ and $R_2$, where $R_1$ and $R_2$ are the radii of the two sides of the lens, respectively, see Figure~\ref{fig:ThinLens_02}.
 We first investigate, whether all real-valued unit determinant matrices can be the result of ray propagating through thin lenses and space.
It is well known \cite{peatross2015physics} that 
  for all ABCD matrices with $C \neq 0$
there exist  {\em principal planes} in distances $p_1$ before and $p_2$ after it, which reduce it to a thin lens:
	\begin{equation}
 S_{p_2} \cdot \begin{pmatrix} A & B \\ C & D \end{pmatrix} \cdot S_{p_1} =   T_f 
\qquad \text{for}\    f= \frac{-1}{C}, \quad  p_1=\frac{1-D}C, \quad  p_2 = \frac{1-A}C \ .  \end{equation}
Clearly this equation makes only sense if $p_1, p_2 >0$ which is not the case for all values of $A,B,C,D$.

So, the following theorem answers the question, 
whether it is possible to construct every unit determinant matrix with a constant number of single thin lenses.
\begin{theorem} \label{3universalABCD}
	Every rational-valued  matrix with unit determinant can be constructed with exactly three thin lenses of appropriately chosen focal lengths and using appropriate spacing. There are matrices where less than three lenses are not enough.
\end{theorem}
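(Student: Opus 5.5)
The plan is to write down the general product of three thin lenses separated by free-space segments, parametrize it explicitly, and show that the parameters can be solved for rationally. I will allow outer spaces as well, so the general system is
\[
M \;=\; S_{p}\, T_{f_3}\, S_{e}\, T_{f_2}\, S_{d}\, T_{f_1}\, S_{q}, \qquad p,q\ge 0,\ d,e>0 .
\]
Here the outer spaces $p,q$ are needed only for a degenerate case. I abbreviate lens powers as $a=-1/f_2$, $b=-1/f_1$, $c=-1/f_3$, so that $T=\left(\begin{smallmatrix}1&0\\ \cdot&1\end{smallmatrix}\right)$ with the power in the lower-left entry.

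\textbf{Step 1 (core computation).} A direct multiplication gives
\[
T_{f_2} S_d T_{f_1}=\begin{pmatrix}1+db & d\\ a+b(1+ad) & 1+ad\end{pmatrix}.
\]
Multiplying on the left by $S_e$ and then $T_{f_3}$, the entries of $X:=T_{f_3}S_eT_{f_2}S_dT_{f_1}$ satisfy
\[
X_{12}=d+e+ade,\qquad X_{11}=1+ea+b\,X_{12},\qquad X_{22}=1+ad+c\,X_{12},
\]
and $X_{21}$ is forced by $\det X=1$.

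\textbf{Step 2 (main case $B\neq 0$).} Given a rational target $\left(\begin{smallmatrix}A&B\\ C&D\end{smallmatrix}\right)$ with $AD-BC=1$ and $B\ne 0$, I take $p=q=0$ and fix rational $d,e>0$. I then solve the three equations in order:
\[
a=\frac{B-d-e}{de},\qquad b=\frac{A-1-ea}{B},\qquad c=\frac{D-1-ad}{B}.
\]
All three are rational, and the $(2,1)$ entry matches automatically because both matrices have determinant $1$. The lenses must be genuine, i.e.\ $a,b,c\neq 0$. Each of these fails only for finitely many algebraic relations among $d,e$, so a generic rational choice of $d,e>0$ avoids them.

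\textbf{Step 3 (case $B=0$).} Then $A\neq 0$ and $D=1/A$. If I insist on $p=q=0$, the requirement $B=0$ forces $a=-(d+e)/(de)$, and hence $D=1+ad=-d/e<0$. So this route works only when $D<0$, and I will instead use an outer space.

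If $D<0$, I simply take $p=q=0$ and choose $d,e$ with $d/e=-D$. For $D>0$, I write $M=S_pM'$ with $M'=S_{-p}M$. The $B$-entry of $M'$ is $B'=-pD$, which is nonzero (negative) for any $p>0$. Applying Step~2 to $M'$ then realizes it with three lenses, and the total spacing stays nonnegative. (Alternatively one uses $S_q$ on the right.) This shows every rational unit-determinant matrix is obtained with three lenses.

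\textbf{Step 4 (lower bound).} This is the step I expect to be the main obstacle. I will parametrize the two-lens family $S_pT_{f_2}S_dT_{f_1}S_q$ with $p,q,d\ge 0$, and the one-lens family similarly. I then exhibit an explicit matrix outside the family. By the same computation as above, $S_pT_aS_dT_bS_q$ has
\[
B=p+d+q+a(\ldots)+b(\ldots)+ab(\ldots),
\]
with sign-constrained coefficients. For example, when $a$ acts as in Step~1 with no third lens, one gets $B=d+(p+q)+\text{terms}$, and $D$ and $A$ are tied to $B$ in a rigid way.

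My first candidate is $M=-I$ or a matrix such as $\left(\begin{smallmatrix}2&0\\0&1/2\end{smallmatrix}\right)$. For these I would show that $B=0$ together with nonnegative spacings forces a sign on $A$ or $D$ that contradicts the target. This generalizes the observation in Step~3 that $B=0$ without outer spaces forces $D<0$. If $-I$ turns out to be realizable, I would switch to a rotation-type matrix like $M_\text{switch}$ and redo the sign analysis. The crux is a careful case split on which of $p,d,q$ vanish.
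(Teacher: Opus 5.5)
Your existence part (Steps 1--3) is correct. The lower bound, which is the second half of the statement, is not proved: Step 4 is only a plan, and two of your three candidate witnesses would fail.

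\textbf{The lower bound.} Your candidate $-I=S_f\,T_f\,S_{2f}\,T_f\,S_f$ is a two-lens system; it is $D_f^2$ in the paper's notation. Your fallback $M_\text{switch}=S_1\,T_1\,S_1$ is even a one-lens system. Also, a sign argument based on $B=0$ alone cannot work, because $B=0$ with $A,D>0$ already occurs for a single lens $T_f$.

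The missing observation is that the entry $C$ is unaffected by outer spaces. For $S_p\,T_{f_2}\,S_d\,T_{f_1}\,S_q$ with powers $a=-1/f_2$, $b=-1/f_1$, one has $C=a+b+abd$. If $C=0$, the core $T_{f_2}S_dT_{f_1}$ is upper triangular with diagonal entries $A=1+bd$ and $D=1+ad$, so $B=Aq+d+Dp$.

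For $M_\text{mult}=\mathrm{diag}(2,\tfrac12)$ this forces $p=d=q=0$. But $d=0$ merges the two lenses into one of power $a+b=C=0$, i.e.\ the identity, which contradicts $A=2$. A single lens gives $C=-1/f\neq0$, and no lens gives $A=1$. Hence $M_\text{mult}$ needs three lenses. The paper instead reads the lower bound off its complete characterizations of one- and two-lens matrices, but this single witness is all the statement requires.

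\textbf{The upper bound.} Your argument here takes a genuinely different and cleaner route than the paper. The paper keeps all four spacings positive and solves the nonlinear system for $f_1,f_2,f_3$ in three cases ($A=0,B\neq0$; $A=0,C\neq0$; $A\neq0$). It then only asserts that the resulting rational expressions can be kept nonzero.

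By setting the outer spaces to zero and working with powers, you make the system triangular. The power $a$ is fixed by $B$, then $b$ and $c$ are linear, and $C$ follows from the determinant. The degeneracies are just the lines $d+e=B$, $e+Ad=B$ and $d+De=B$, which generic rational $d,e>0$ avoid.

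Two small remarks:
\begin{itemize}
\item Since $B=0$ forces $D=1/A\neq0$, your $S_{-p}$ reduction handles every matrix with $B=0$. The separate $D<0$ subcase can therefore be dropped; there you never specified how to choose $b,c\neq0$ with $Ac+Db=C-a$.
\item If strictly positive outer spaces are wanted, apply Step 2 to $S_{-p}MS_{-q}$. Its upper-right entry $B-qA-pD+pqC$ is a nonzero polynomial in $p,q$, so generic rational $p,q>0$ work.
\end{itemize}
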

%
%
In the proof, presented in the Appendix, we have 
adapted the focal lengths to the given spaces
,while in reality, the type of lenses is given and one adjusts the spacing. So, we wonder, how many {\em different types} of lenses are necessary to construct all ABCD matrices. In fact a single type of lens with given focal length $f$ is enough.

Clearly for any focal length $f<0$ not all matrices can be generated from such a concave lens, since its corresponding ABCD matrix and the space matrix $S_d$ have only non-negative entries. Therefore, no matrix with negative entries, like $M_{f'}$ with $f'>0$,  can be produced by matrix multiplication. 

But for  focal length $f>0$, any such single thin lens is universal.
\begin{theorem}\label{universallense}
Every rational-valued matrix with determinant 1 can be produced with a constant number of thin lenses of a given rational focal length $f>0$ and adjusted distances.
\end{theorem}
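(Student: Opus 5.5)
Theorem~\ref{3universalABCD} does most of the work: every rational unit-determinant matrix is a product of three thin lenses $T_{f_1},T_{f_2},T_{f_3}$ with spaces $S_{d_i}$ between them, and all $f_i,d_i$ rational. So it suffices to show that a single thin lens $T_g$, for any rational $g\neq 0$, and a single space $S_d$, for any rational $d\ge 0$, can each be produced from a constant number of lenses $T_f$ with the fixed focal length $f>0$ and nonnegative rational spacings. Substituting these gadgets into the three-lens construction then gives the claim with a constant count. If some $d_i$ from Theorem~\ref{3universalABCD} is negative, I would also need to synthesize $S_d$ for $d<0$.

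The key tool is the pair of identities obtained by combining lenses. First, two lenses at zero distance give $T_fT_f=T_{f/2}$, and $k$ stacked copies give a lens of power $k/f$. Second, the system $T_f S_{2f} T_f$ (lenses separated by $2f$) is a $4f$-type relay. Computing it gives
$$T_fS_{f}T_f=\begin{pmatrix}0&f\\-1/f&0\end{pmatrix},$$
which is a ``rotation'' by $90^\circ$ in phase space. Four copies of it give $-I$, and its square $T_fS_fT_fT_fS_fT_f=-I$ up to a reflection. I therefore plan to use the element $R:=T_fS_fT_f$ as the basic building block: $R$ and $T_f$, $S_d$ ($d\ge0$) generate, and conjugation by $R$ swaps the roles of lens and space matrices. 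Concretely, $R\,S_d\,R^{-1}=T_{g}$ with $g=f^2/d$ (up to sign), and $R^{-1}=R^3$ is again a constant-length product of $T_f$ and nonnegative spaces. In this way every thin lens $T_g$ with $g$ of either sign is realized: the sign is chosen by conjugating with $R$ or with $R^{-1}$. Negative spaces are handled the same way, as $S_{-d}=R\,T_{-f^2/d}\,R^{-1}$ up to sign. Any leftover $-I$ factors can be cancelled by $R^2$, because $R^4=I$.

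The main obstacle is bookkeeping the signs, together with verifying that every spacing used is nonnegative and rational. I would first compute $R$, $R^2=-I$ and $RS_dR^{-1}$ explicitly, then check the sign of the resulting lens power. After that I would count the lenses: each of the three lenses and at most two spaces from Theorem~\ref{3universalABCD} costs at most about $2\cdot 3\cdot 2+1$ copies of $T_f$, so the total is a fixed constant independent of the target matrix. Consecutive zero-length spaces merge into stacked lenses, which is harmless because distance $0$ is allowed.
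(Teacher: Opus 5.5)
\textbf{There is a genuine gap: your construction never produces a diverging lens.}

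Your reduction to single lenses via Theorem~\ref{3universalABCD} matches the paper's proof, and $T_g=R\,S_{f^2/g}\,R^{-1}$ is a valid constant-size gadget for $g>0$. The gap is the diverging case $g<0$. You cannot avoid it, because the focal lengths supplied by Theorem~\ref{3universalABCD} are only guaranteed to be nonzero.

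Since $R^2=-I$, you have $R^{-1}=-R$. So conjugation by $R$ and conjugation by $R^{-1}$ are the same map, and $R\,S_d\,R^{-1}=R^{-1}S_d\,R=T_{f^2/d}$ is converging for every $d>0$. Choosing between $R$ and $R^{-1}$ does not select the sign.

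In fact no conjugation in $\mathrm{SL}(2,\mathbb{R})$ can select it. For $u=(u_1,u_2)^{T}$, the form $q_X(u)=\det[u,(X-I)u]$ satisfies $q_{PXP^{-1}}=q_X\circ P^{-1}$ whenever $\det P=1$. It equals $-d\,u_2^2$ for $S_d$ but $-u_1^2/g$ for $T_g$. Hence $S_d$ with $d>0$ is never conjugate to a lens with $g<0$.

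Your fallback via negative spaces is circular: $R\,T_g\,R^{-1}=S_{f^2/g}$ gives $S_{-d}$ only from a diverging lens. Negative spaces are not needed anyway, since Theorem~\ref{3universalABCD} gives $d_i>0$.

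Two minor slips: two copies of $R$, not four, give $-I$; and the matrix you display is $T_fS_fT_f$, not $T_fS_{2f}T_f$.

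\textbf{The missing idea is an identity that is not a conjugation and uses $-I$ essentially.} The paper uses
\[
S_f\,T_f\,S_d\,T_f\,S_f=\begin{pmatrix}-1&0\\ \frac{d-2f}{f^2}&-1\end{pmatrix}=-T_{g},\qquad g=\frac{f^2}{d-2f}.
\]
For $0<d<2f$ this yields $-T_g$ for every $g<-f/2$, and for $d>2f$ it yields every $g>0$. Multiplying by $D_f^2=-I$, with $D_f=S_fT_fS_f$ (your $R$), gives $T_g$ with four lenses.

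For $g\in[-f/2,0)$, the paper applies the same identity again with base focal length $-g>0$; the middle gap then becomes exactly $-g$, so $-T_g=S_{-g}T_{-g}S_{-g}T_{-g}S_{-g}$. Each $T_{-g}$ is replaced by the two-lens gadget for $-T_{-g}$, and the two signs cancel. This makes $T_g$ cost six lenses and the whole matrix at most $18$.

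Alternatively, you could stay within your conjugation framework after one such identity:
\begin{itemize}
\item build $T_{k^2g}$ with $k\in\mathbb{N}$ large enough that $k^2g<-f/2$;
\item conjugate it by the magnifier $\mathrm{diag}(1/k,k)=-D_{F_1}D_{F_2}$, where $F_1,F_2>0$ satisfy $F_1/F_2=1/k$; this conjugation maps $T_{k^2g}$ to $T_g$;
\item realize the converging lenses $F_1,F_2$ with your $R$-conjugation gadget, so the count stays constant.
\end{itemize}
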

%
%
%
The detailed proof is given in the Appendix. The number of necessary thin lenses is at most 18. We are aware that a construction with a smaller number of thin lenses exists. The exact upper and lower bound for this number is left for future work. The following theorem shows that a constant number of lenses also can fill a large enough distance $d$.

\begin{theorem}\label{anylength}
Given a single type of thin lenses with focal length $f>0$, then for every matrix with determinant~1 there is some $d_0$ such that for all $d\geq d_0$ there is a construction with constant number of lenses which length from start to finish is exactly $d$.
 \end{theorem}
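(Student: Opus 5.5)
Our plan is to reduce Theorem~\ref{anylength} to Theorem~\ref{universallense}. Fix $f>0$ and a target matrix $M$ with determinant~1. By Theorem~\ref{universallense}, $M$ is realized by some finite alternating product of thin lens matrices $T_f$ and space matrices $S_{d_i}$ with $d_i\ge 0$. Let $L_M$ denote the total length of this construction. The idea is to prepend (or append) a fixed optical ``identity-like'' block whose product equals the identity matrix, but whose physical length can be stretched continuously from some minimum to infinity. The total length then ranges over all $d\ge d_0$.

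The key gadget is the classical $4f$ relay. We have
\[
T_f S_f T_f = \begin{pmatrix} 0 & f\\ -\tfrac1f & 0\end{pmatrix},
\]
so $S_f T_f S_f$ is a pure "Fourier" matrix $F$ with $F^2=-I$ and $F^4=I$. We plan to get a stretchable identity as follows. The matrix $S_a\, F\, S_b\, F^{-1}$ is not the identity for general $a,b$, so instead we use the fact that conjugating a space matrix by $F$ yields a lens-type matrix. Explicitly, $F S_t F^{-1} = \begin{pmatrix}1&0\\ -t/f^2 & 1\end{pmatrix}=T_{f^2/t}$. Hence for any $t\ge 0$ the product
\[
S_t\,\cdot\, F^{-1}\,\big(F S_{-t} F^{-1}\big) F \;=\; I
\]
is only formal, because $S_{-t}$ is not physical. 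A cleaner route is to avoid exact identities and instead re-solve the decomposition. We write $M = M\,S_{-s}\,S_s$ and realize $M S_{-s}$ (still determinant~1) by Theorem~\ref{universallense}, followed by a physical free space $S_s$ of length $s$. The total length is $s + L(MS_{-s})$. The difficulty is that $L(MS_{-s})$ depends on $s$.

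So the main step, and the expected obstacle, is to show that the construction in Theorem~\ref{universallense} can be chosen so that its length $L(N)$ is bounded for $N$ in the family $\{MS_{-s}\}$, or at least grows sublinearly in $s$. Then $s\mapsto s+L(MS_{-s})$ is continuous (given that the distances in the construction depend continuously on the entries of $N$) and tends to infinity. By the intermediate value theorem it then attains every $d\ge d_0:=\max$ over an initial interval.

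As a more robust alternative, we would first realize $M$ by a fixed construction of length $L_M$. We then append a one-parameter block $B(s)=S_{g(s)}\,T_f\,S_{h(s)}\,T_f\,S_{k(s)}\,\cdots$ with nonnegative distances, constant lens count, and product exactly $I$ for every $s$ in a range where its length covers $[\ell_0,\infty)$. The natural candidate is the composition of two $4f$-type relays of magnifications $m$ and $1/m$. A two-lens telescope with lenses separated by $f+f'$ gives $\operatorname{diag}(-f'/f,-f/f')$. Since only focal length $f$ is available, we would emulate a lens of focal length $f'$ via Theorem~\ref{universallense} and a symmetric principal-plane argument. Choosing the spacings as continuous functions of $m$ whose sum tends to infinity as $m\to\infty$, while the product stays $I$, yields every length $d\ge d_0$ by continuity. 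Checking the nonnegativity of all spacings in this parametrized family is the step we expect to require the most care.
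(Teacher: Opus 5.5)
\textbf{There is a genuine gap: the stretchable identity block is never constructed.} Your second route has the same strategy as the paper: realize $M$ once by a fixed system, then append a block whose matrix is exactly $I$ but whose length can be stretched continuously to infinity. But the proposal stops where the proof has to start. The whole content of the theorem is a one-parameter family of systems with a fixed number of $f$-lenses, nonnegative spacings, product exactly $I$, and length continuous in the parameter and unbounded. You defer both the construction and the nonnegativity check, so as written nothing is proved.

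The continuity needed for your intermediate-value argument also cannot come from citing Theorem~\ref{universallense} as a black box. That theorem is stated only for rational matrices and says nothing about how the distances depend on the target. Its proof, via Theorem~\ref{3universalABCD}, involves free choices of $d_1,\dots,d_4$ and case splits such as $f'\in[-f/2,0)$. The same objection defeats your first route: there is no reason for $s\mapsto s+L(MS_{-s})$ to be continuous, let alone to have controlled growth. The ``symmetric principal-plane argument'' is unnecessary anyway. Replacing $T_{f'}$ by any system whose matrix equals $T_{f'}$ leaves the matrix product unchanged. Also, a telescope needs the outer focal spacings, i.e.\ $D_{f'}D_f$, to become diagonal.

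\textbf{The missing ingredient is an explicit two-lens gadget.} The paper uses
$N_x=S_{1/x+f}\,T_f\,S_{f(2+fx)}\,T_f\,S_{1/x+f}$.
\begin{itemize}
\item For every $x>0$ all spacings are positive.
\item A direct computation shows $N_x$ is anti-diagonal with determinant~1, so $N_x^2=-I$ and $N_x^4=I$.
\item The 8-lens block $N_x^4$ has length $8/x+16f+4f^2x$. This is continuous on $(0,\infty)$, has minimum $(16+8\sqrt2)f$, and is unbounded at both ends.
\item Appending it to a fixed realization of $M$ of length $d'$ therefore yields every $d\ge d'+(16+8\sqrt2)f$.
\end{itemize}

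Your relay idea can also be completed, but only with the explicit formulas from the proof of Theorem~\ref{universallense}, not its statement.
\begin{itemize}
\item For $f'>0$, the four-lens system $\widehat T_{f'}=D_f^2\,S_fT_fS_{f(f+2f')/f'}T_fS_f$ realizes $T_{f'}$ with positive spacings.
\item Hence $S_{f'}\widehat T_{f'}S_{f'}$ realizes $D_{f'}=\begin{pmatrix}0&f'\\-1/f'&0\end{pmatrix}$, with length $2f'+8f+f^2/f'$.
\item Its fourth power, or your pair of relays $D_fD_{f'}\cdot D_{f'}D_f=-D_f^2=I$, is an exact identity. Its length is continuous in $f'$ and tends to infinity as $f'\to\infty$.
\end{itemize}
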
 
\FloatBarrier

The proof of this Theorem can be found in the Appendix.

\section{From ABCD Matrices to 2D Ray Tracing Turing Completeness}\label{lens-tracing}

We now construct an optical system based on ABCD lenses and mirrors that simulates an RTM. We start from the reduced 2-variable linear reversible flowchart described in Section~\ref{sec:Red-2-LRFC}. The ray is guided through a grid with cell size $g$; see Figure~\ref{overview-grid}. The ray starts at a grid node, and the Halt node is also represented by a grid node. Grid nodes are either left empty or equipped with mirrors tilted at $45^\circ$.
Empty grid nodes are used for the Start and Halt nodes, as well as for perpendicular crossings. The mirrors redirect an incoming ray with angle $\theta$ to an outgoing ray with angle $\pi/2 - \theta$, either to the left or to the right. We also use mirrors to implement Test and Assertion operations, as shown in Figure~\ref{node-ops}.

\begin{figure}
\begin{minipage}[b]{0.45\textwidth}
\includegraphics[width=1\textwidth]{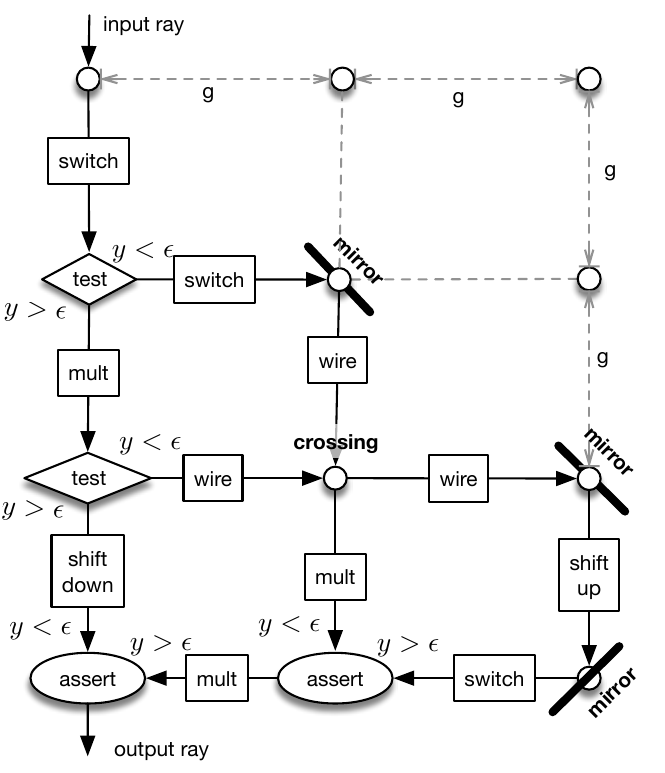}
\caption{Geometry of the grid-based layout for the 2D ray tracing problem. \label{overview-grid}}
\end{minipage}\hfill
\begin{minipage}[b]{0.54\textwidth}
\includegraphics[width=1\textwidth]{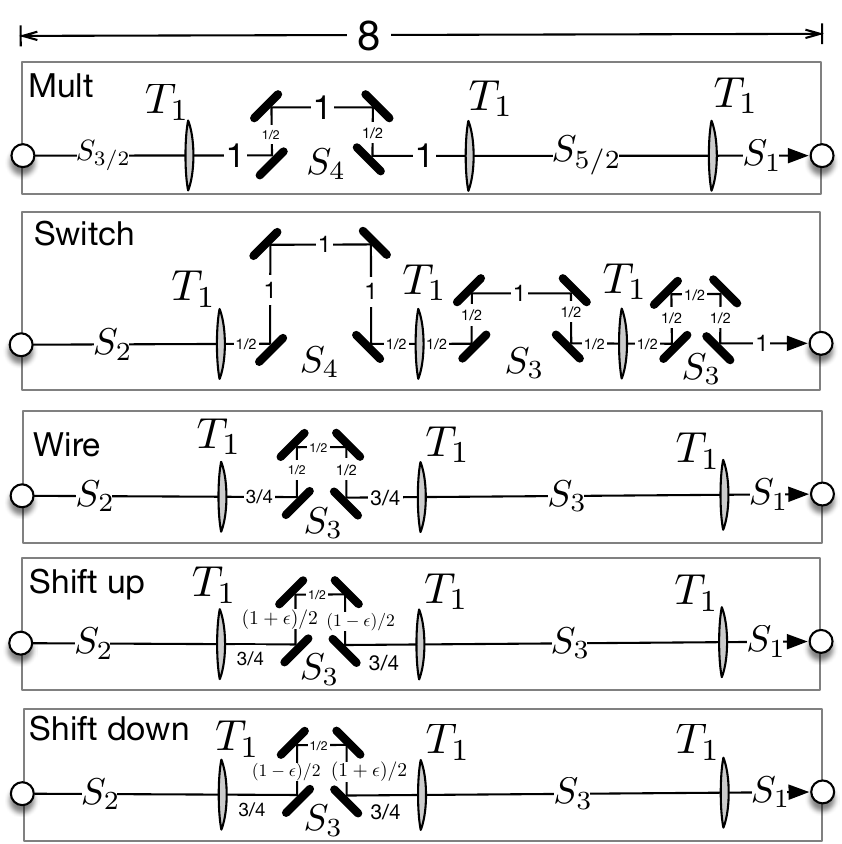}
\caption{Step operations implemented along the edges of the grid. \label{edge-ops}}
\end{minipage}
\end{figure}
Edges connecting grid nodes implement the Step operations \emph{shift-up}, \emph{shift-down}, \emph{multiplication}, \emph{switch}, or a \emph{wire}. The wire corresponds to the identity function, which cannot be realized by empty space alone, since this would result in multiplication by the space matrix $S_g$. We use four rotated versions of each component to accommodate rays traveling in the four cardinal directions.

The operations of the reversible flowchart can be implemented using mirrors and lenses modeled by ABCD matrices as follows. We begin with the Step operations, which must match the grid spacing. In Theorem~\ref{anylength}, we showed that for any fixed lens with focal length $f>0$, one can choose a grid size $g$ such that a construction of exactly this length exists. However, this general result cannot be applied directly here, since the resulting lens positions may be irrational, as they arise from solving quadratic equations.
%
%
%
%
%
%
%
%
%
%
Since for the hardness result it suffices to pick one instance, we choose (among many possibilities) the following combinations 
resulting in three components of overall lengths $9$ and $12$.
\begin{eqnarray} \label{eqn_matrixeg}
	M_\text{mult} & = & S_{1}\cdot T_{1}\cdot S_{5/2}\cdot T_{1}\cdot S_{4}\cdot T_{1}\cdot S_{3/2} \label{mult} \\
	M_\text{switch} & = &  S_{3}\cdot T_{1}\cdot S_{3}\cdot T_{1}\cdot S_{4}\cdot T_{1}\cdot S_{2} \\
	M_\text{wire} & = &  S_{1}\cdot T_{1}\cdot S_{3}\cdot T_{1}\cdot S_{3}\cdot T_{1}\cdot S_{2} \label{switch} 
	\label{wire} 
 \ .
\end{eqnarray}
A remaining issue arises from the choice of $y,\theta \in (-2,2)$, which may cause the edges to become too wide. As a consequence, lenses may interfere with other parallel edges. Moreover, the underlying assumption of the ABCD model, namely $\sin \theta \approx \theta$, is clearly violated.
We resolve this problem by scaling the interval to $(-2\epsilon,2\epsilon)$ for some rational $\epsilon > 0$, for example $\epsilon=\frac{1}{100}$, and replacing the tape encoding by
\(
y = \epsilon \sum_{i=0}^{\infty} v(s_i)\cdot 4^{-i-1}, \) and \(
\theta = \epsilon \sum_{i=1}^{\infty} v(s_{-i})\cdot 4^{-i}.
\)
Accordingly, we replace all Test and Assert conditions by $y < \epsilon$, and change the shift operations to $y \leftarrow y + \epsilon$ (\emph{shift-up}) and $y \leftarrow y - \epsilon$ (\emph{shift-down}), without changing anything else. 

In order to obtain a smaller grid size  $g=8$, we fold all three mirror lens systems using a periscope mirror construction; see Figure~\ref{edge-ops}. 
This periscope is adapted to transform the wire into \emph{shift-up} and \emph{shift-down} operations, which moves the ray respectively up or down by $\epsilon$.
This yields the edge operations shown in Figure~\ref{edge-ops} and the node operations in Figure~\ref{node-ops}. Note that the matrix multiplication operations remain unchanged, since the transformation corresponds to a uniform scaling of all values.

\begin{figure}
\includegraphics[width=0.9\textwidth]{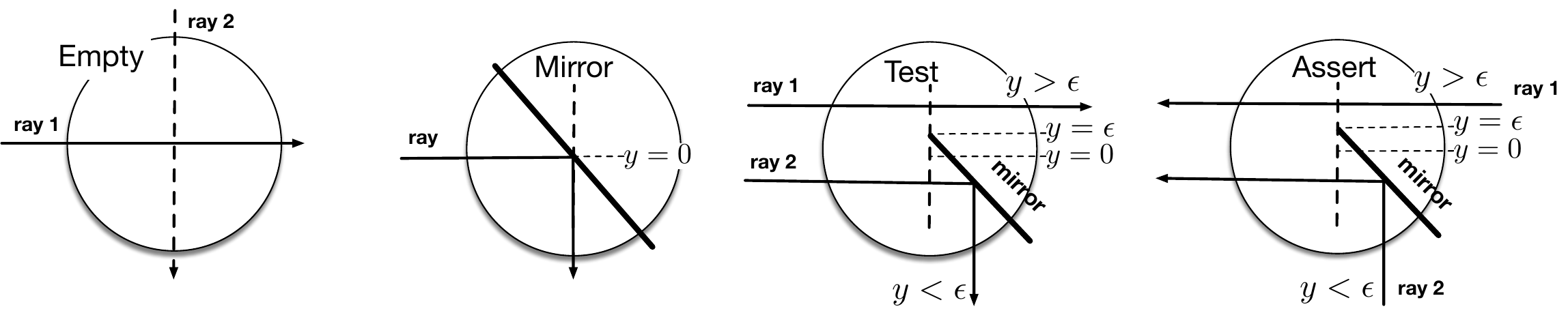}
\caption{Crossing/connection, redirection, Test, and  Assertion at the nodes of the ray-tracing grid. \label{node-ops}}
\end{figure}

Summarizing, to prove the hardness of the 2D ray tracing problem, we start from a given reversible Turing machine and construct the corresponding reduced 2-variable linear reversible flowchart as described in Section~\ref{sec:Red-2-LRFC}. We then embed this flowchart into a grid of size $g$, placing Start, Halt, Test, and Assert nodes at grid points such that all connections can be routed along separate edges. Crossings and arbitrarily long wires are realized using empty nodes.
Next, we apply the scaling with a rational $\epsilon>0$, e.g., $\epsilon = 1/100$, and replace each grid node by the corresponding mirror component for  Test, Assertion or redirection. Finally, we substitute each edge operation by the corresponding three-lens construction as described in Equations~(\ref{mult})--(\ref{wire}). The correctness of the construction follows from the implementations shown in Figure~\ref{edge-ops} and Figure~\ref{node-ops}.

Hence, given an input ray with offset $y$ encoding the initial tape content and an empty tape encoded in $\theta$, the output ray encodes the final tape configuration via its offset and angle, provided that the RTM halts. Moreover, the constructed system reaches the Halt node if and only if the simulated RTM halts. This establishes the main result: two-dimensional ray tracing with ABCD-modeled lenses and mirrors is Turing-complete.

\FloatBarrier



\section{Conclusions and Open Problems}
\label{conclusions}

We have shown that the two-dimensional ray tracing problem with thin lenses and plane mirrors is Turing-complete, thereby resolving the question of whether full three-dimensional space is necessary for computational universality in optical systems. Our construction demonstrates that the classical ABCD matrix formalism from paraxial optics provides a natural and intuitive framework for universal computation. 
The main tool in our simulation of a reversible Turing machine is the close relationship between ray tracing and the concept of reversible flowcharts. We show that such flowcharts are Turing-complete even when restricted to two rational variables and linear functions. In combination with a detailed analysis of the capabilities of ABCD-modeled lenses, this yields our main result: the Turing-completeness of the two-dimensional ray tracing problem with ABCD-modeled lenses and mirrors.

%
%


Unlike the ray particle tracing model introduced in~\cite{adejoh_fsttcs_final}, which simulates a two-stack automaton and encodes one of the stacks in the arrival time (thereby introducing a 2.5-dimensional construction), our model operates entirely within the two-dimensional plane. 
Furthermore, our construction uses only thin lenses and plane mirrors, i.e., stationary components. In particular, we eliminate the need for moving walls (used for time-offset manipulation), one-way gates, or parabolic mirrors. It is worth noting that the constructions in~\cite{adejoh_fsttcs_final} could be simplified by simulating a reversible Turing machine. In that case, one-way gates would no longer be required.

As an open problem, we ask how many lenses of a given focal length $f>0$ are necessary to realize an arbitrary matrix with unit determinant. Clearly, our bound of 18 lenses can be improved. It also remains open whether a lens system with rational parameters can be constructed for sufficiently large rational total length 
(from start to finish). Moreover, a construction using concave and convex mirrors in place of lenses appears feasible within the ABCD model. Finally, the ultimate question is whether Turing-completeness of the two dimensional ray tracing problem can be established for natural optical systems that do not rely on approximations.


\newpage
\bibliography{lit_for_illumination-mirror_tools_for_time}


\newpage
\appendix

%

\section{Appendix: Combination of Thin Lenses for Generating ABCD Matrices}
\label{combo_lenses}

The following proposition is used in the lower bound argument showing that fewer than three lenses are insufficient to construct all unit determinant matrices.

\begin{proposition}\label{proponelense}
A rational-valued $2 \times 2$ matrix $M$ with unit determinant can be realized using a single thin lens with focal length $f$, i.e.,
\[
M = S_{d_2} \cdot T_f \cdot S_{d_1},
\]
if and only if either $(C > 0 \text{ and } A,D \geq 1)$ or $(C < 0 \text{ and } A,D \leq 1)$.
\end{proposition}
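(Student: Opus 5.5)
The plan is a direct computation: multiply out the product $S_{d_2}\cdot T_f\cdot S_{d_1}$ symbolically, read off how each of its entries depends on the three parameters $d_1,d_2\ge 0$ and $f\neq 0$, and then solve for these parameters given a target $M$. I take the physically meaningful convention that the distances must be non-negative, $d_1,d_2\geq 0$. I also take $f\in\Real\setminus\{0\}$, so the lens term $-1/f$ is never zero.

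First I compute the product. Writing $c=-\frac1f$, we get $T_f S_{d_1}=\tdm{1}{d_1}{c}{1+cd_1}$. Multiplying by $S_{d_2}$ on the left gives
\[
S_{d_2}T_fS_{d_1}=\tdm{1+cd_2}{\;d_1+d_2(1+cd_1)}{c}{1+cd_1}.
\]
Hence $C=c$, $A-1=C d_2$ and $D-1=C d_1$. The entry $B$ is then forced by the unit determinant, since all three factors have determinant $1$.

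For the ``only if'' direction: if $M$ equals such a product, then $C=-1/f\neq 0$. If $C>0$, then $A-1=Cd_2\ge 0$ and $D-1=Cd_1\ge 0$, so $A,D\ge 1$. If $C<0$, the same identities give $A,D\le 1$.

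For the ``if'' direction: given $M$ with $AD-BC=1$ satisfying one of the two sign conditions, set
\[
f=-\tfrac1C,\qquad d_1=\tfrac{D-1}{C},\qquad d_2=\tfrac{A-1}{C}.
\]
These are rational because $M$ is rational, and the sign condition makes $d_1,d_2\ge 0$. With these choices the product reproduces $A$, $C$ and $D$. Its $B$-entry $B'$ also matches: both matrices have determinant $1$, so $AD-B'C=1=AD-BC$, and since $C\neq 0$ this gives $B'=B$.

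The only delicate point is the degenerate case $C=0$, which the statement excludes through the strict inequalities on $C$. I would remark that it cannot arise, because $C=-1/f$ and $f$ is a finite nonzero focal length. I would also note explicitly that the equivalence depends on the non-negativity convention for distances: allowing negative spacings would make every matrix with $C\neq0$ realizable. Beyond that, the argument is routine, and I expect no real obstacle.
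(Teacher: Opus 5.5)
Your proposal is correct and follows the paper's proof. You multiply out $S_{d_2}T_fS_{d_1}$, read off $f=-1/C$, $d_1=(D-1)/C$, $d_2=(A-1)/C$, and translate $d_1,d_2\ge 0$ into the sign conditions. Two things the paper's ``matching coefficients'' step leaves implicit are made explicit in your version: the determinant argument that the $B$-entry matches automatically, and the observation that $C=-1/f\neq 0$ is forced.
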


\begin{proof}
Let 
\[
M = \begin{pmatrix} A & B \\ C & D \end{pmatrix}
\]
be an ABCD matrix with $\det(M)=1$. We seek parameters $d_1, d_2 \in \mathbb{R}_{\ge 0}$ and $f \in \mathbb{R}\setminus\{0\}$ such that
\[
M = S_{d_2} \cdot T_f \cdot S_{d_1}.
\]
A direct computation yields
\[
S_{d_2} \cdot T_f \cdot S_{d_1}
=
\begin{pmatrix}
1 - \frac{d_2}{f} & d_1 \left(1 - \frac{d_2}{f}\right) + d_2 \\
-\frac{1}{f} & 1 - \frac{d_1}{f}
\end{pmatrix}.
\]
Matching coefficients, we obtain (assuming $C \neq 0$):
\[
f = -\frac{1}{C}, \quad
d_1 = \frac{D - 1}{C}, \quad
d_2 = \frac{A - 1}{C}.
\]

Since physical distances must satisfy $d_1, d_2 \ge 0$, we obtain the following conditions:
\begin{itemize}
    \item If $C > 0$, then $d_1 \ge 0$ and $d_2 \ge 0$ imply $A \ge 1$ and $D \ge 1$.
    \item If $C < 0$, then $d_1 \ge 0$ and $d_2 \ge 0$ imply $A \le 1$ and $D \le 1$.
\end{itemize}
This proves the claim.
\end{proof}

For the case $C = 0$, a construction using at most one lens can only realize free-space propagation matrices of the form
\[
S_B = \begin{pmatrix} 1 & B \\ 0 & 1 \end{pmatrix},
\]
which implies $A , D = 1$ and $B \ge 0$.
\begin{theorem}\label{twolenses}
Every rational-valued $2 \times 2$ matrix with unit determinant and $C,D \neq 0$ can be constructed using two thin lenses with appropriate spacing if and only if one of the following conditions holds:
\begin{enumerate}
    \item $B = C = 0$ and $A < 0$,
    \item $C = 0$, $B \neq 0$, and ($A < 0$ or $B > 0$),
    \item $B = 0$, $C \neq 0$, and ($A < 0$ or $C > 0$),
    \item $B, C \neq 0$, and ($A < 0$ or $B > 0$ or $C > 0$ or $D < 0$).
\end{enumerate}
\end{theorem}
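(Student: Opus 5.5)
The plan is to reduce the two-lens question to a sign condition on one matrix entry, by separating the inner lens--space--lens core from the two outer free-space segments. A general two-lens system has the form
\[
M = S_{d_3}\, T_{f_2}\, S_{d_2}\, T_{f_1}\, S_{d_1}, \qquad d_1,d_2,d_3 \ge 0,\ f_1,f_2 \neq 0 .
\]
I write $M = S_{d_3} N S_{d_1}$ with $N = T_{f_2} S_{d_2} T_{f_1}$.

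\textbf{Step 1: the core.} A direct computation gives
\[
N = \begin{pmatrix} 1-\frac{d_2}{f_1} & d_2 \\ -\frac1{f_1}-\frac1{f_2}+\frac{d_2}{f_1f_2} & 1-\frac{d_2}{f_2}\end{pmatrix}.
\]
From this I read off which unit-determinant matrices $N=\tdm{A_N}{B_N}{C_N}{D_N}$ are cores.
\begin{itemize}
\item If $B_N>0$, then $N$ is a core as soon as $A_N\neq 1$ and $D_N\neq 1$. Set $d_2=B_N$, $f_1=B_N/(1-A_N)$ and $f_2=B_N/(1-D_N)$. The entry $C_N$ is then forced by the determinant, and all parameters stay rational.
\item If $B_N=0$, then $N$ is a single thin lens, so $A_N=D_N=1$.
\item $B_N<0$ is impossible.
\end{itemize}

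\textbf{Step 2: the outer spaces.} Conjugating by the outer spaces, $N = S_{-d_3} M S_{-d_1}$, gives
\[
C_N = C,\quad A_N = A - d_3C,\quad D_N = D - d_1 C,\quad B_N = B - d_1A - d_3D + d_1d_3C .
\]
So $M$ is realizable with two lenses if and only if there exist $d_1,d_3\ge 0$ with $B_N(d_1,d_3)>0$ and $A_N,D_N\neq 1$, or the degenerate alternative $B_N=0$ with $A_N=D_N=1$ holds.

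\textbf{Step 3: sufficiency.} I treat each sign condition separately.
\begin{itemize}
\item If $B>0$, take $d_1,d_3$ small.
\item If $A<0$, take $d_3=0$ and $d_1$ large, so $B_N=B-d_1A>0$.
\item If $D<0$, argue symmetrically with $d_1=0$ and $d_3$ large.
\item If $C>0$, take $d_1=d_3$ large, so the quadratic term dominates.
\end{itemize}
In every case the valid region for $(d_1,d_3)$ is an open set in the quadrant. The conditions $A_N=1$ and $D_N=1$ each fix only a single line in that set. I can therefore pick rational $(d_1,d_3)$ avoiding both lines, which keeps $f_1,f_2$ finite, nonzero and rational. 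This handles cases 2--4 and the $B>0$ parts. Case 1 ($B=C=0$, $A<0$) falls under the $A<0$ argument, since then $C_N=0$ but $B_N=-d_1A>0$.

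\textbf{Step 4: necessity.} Suppose none of the listed conditions hold, i.e.\ $A,D\ge 0$, $B\le 0$, $C\le 0$. Then every term of $B_N$ is $\le 0$ on the whole quadrant, so $B_N>0$ is impossible. It remains to exclude the degenerate alternative $B_N=0$, $A_N=D_N=1$ by direct inspection:
\begin{itemize}
\item If $C=0$, then $M$ is upper triangular with $A_N=A$ and $D_N=D$. Together with the determinant this forces $M=S_B$ with $B\le 0$, which is the trivial lens-free situation excluded by the hypotheses.
\item If $C<0$, then $A_N=D_N=1$ forces $d_3=(A-1)/C$ and $d_1=(D-1)/C$. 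These must be nonnegative, and then $B_N=0$ is exactly the one-lens situation of Proposition~\ref{proponelense}. That situation can be realized by two lenses, so I must check it against the intended reading of the statement.
\end{itemize}

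\textbf{Main obstacle.} The crux is this boundary bookkeeping in Step 4. The hypothesis ``$C,D\neq 0$'' interacts awkwardly with cases 1 and 2, and the one-lens-realizable matrices must be separated out. I would first verify the necessity direction carefully on $C<0$, $A,D\in[0,1]$, $B\le0$. If such matrices turn out to be realizable, I would restrict the claim to ``exactly two non-degenerate lenses with $d_2>0$'', which is what the condition $B_N>0$ really captures.
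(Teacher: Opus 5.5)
Your proposal is correct and is essentially the paper's own proof. Your $B_N=B-d_1A-d_3D+d_1d_3C$ is exactly the paper's expression for $d_2$. Your $f_1=B_N/(1-A_N)$ and $f_2=B_N/(1-D_N)$ are the paper's formulas for $f_1,f_2$ in case 4 once the common factor $C$ is cancelled (using $AD-1=BC$). The conjugation $N=S_{-d_3}MS_{-d_1}$ just merges the paper's four cases into one sign analysis and makes explicit the nonvanishing of denominators that the paper only asserts. Your $d_2>0$ reading is the paper's implicit convention. Under it, the degenerate matrices you isolate in Step~4 reduce to $I$ and $T_f$ with $f>0$, which are realizable only with $d_2=0$ and are correctly excluded.

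There is one slip, in Step~3: you claim that $A_N=1$ and $D_N=1$ each cut out only a single line. For $C=0$ and $A=D=1$, i.e.\ $M=S_B$, both conditions hold identically. So $S_B$ with $B>0$ is not realizable with $d_2>0$, even though it satisfies item~2. The paper's case-2 formulas divide by $A-1$ and overlook the same edge case. The stated hypothesis $C\neq 0$ formally excludes it.
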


\begin{proof}
Let
\[
M = \begin{pmatrix} A & B \\ C & D \end{pmatrix}, 
\qquad \det(M) = 1.
\]
We seek parameters $d_1, d_2, d_3 \ge 0$ and $f_1, f_2 \in \mathbb{R} \setminus \{0\}$ such that
\[
M = S_{d_3} \cdot T_{f_2} \cdot S_{d_2} \cdot T_{f_1} \cdot S_{d_1}.
\]
A direct computation yields
\begin{equation}\label{2leq}
\begin{pmatrix} A & B \\ C & D \end{pmatrix}
=
\begin{pmatrix}
1 - \frac{d_3}{f_2} - \frac{d_2 \left(1 - \frac{d_3}{f_2}\right) + d_3}{f_1}
&
d_1 \!\left(1 - \frac{d_3}{f_2} - \frac{d_2 \left(1 - \frac{d_3}{f_2}\right) + d_3}{f_1}\right)
+ d_2 \left(1 - \frac{d_3}{f_2}\right) + d_3
\\[1ex]
-\frac{1}{f_2} - \frac{1 - \frac{d_2}{f_2}}{f_1}
&
d_1 \!\left(-\frac{1}{f_2} - \frac{1 - \frac{d_2}{f_2}}{f_1}\right)
+ 1 - \frac{d_2}{f_2}
\end{pmatrix}.
\end{equation}

We distinguish cases.

\begin{enumerate}
\item \textbf{$B = C = 0$.}
Then $A D = 1$ and $A,D \neq 0$. From
\[
0 = -\frac{1 - \frac{d_2}{f_2}}{f_1} - \frac{1}{f_2}
\]
we obtain $f_2 = d_2 - f_1$.

For $A = 1$ this implies $d_2 = 0$, hence no separation of the lenses; thus $A = D = 1$ is impossible.

Solving for $f_1, f_2, d_3$ gives
\begin{align}
f_1 &= -\frac{d_2}{A - 1}, &
f_2 &= \frac{A d_2}{A - 1}, &
d_3 &= -A^2 d_1 - A d_2.
\end{align}

For $A > 0$, no choice of $d_1 \ge 0$, $d_2 > 0$ yields $d_3 \ge 0$.
For $A < 0$, choosing $d_2 \ge (-A)d_1$ ensures $d_3 \ge 0$.

\item \textbf{$C = 0$, $B \neq 0$.}
Then $A D = 1$ and again $f_1 = d_2 - f_2$, implying $f_1 = -\frac{f_2}{A}$.
Solving yields
\begin{align}
f_2 &= \frac{A d_2}{A - 1}, &
f_1 &= \frac{d_2}{1 - A}.
\end{align}
Substituting into the expression for $B$ gives
\begin{equation}
d_2 = B - A d_1 - \frac{d_3}{A}.
\end{equation}
We require $d_1, d_3 \ge 0$ and $d_2 > 0$.
This is impossible if $A > 0$ and $B < 0$.
Otherwise (i.e., $A < 0$ or $B > 0$), a feasible choice exists.

\item \textbf{$B = 0$, $C \neq 0$.}
Solving~\eqref{2leq} for $f_1, f_2, d_2$ yields
\begin{align}
f_1 &=
\frac{A^2 d_1 + d_3 - A C d_1 d_3}
     {A (A - 1 - C d_3)}, \label{achtzehn} \\[0.5ex]
f_2 &=
\frac{A C d_1 d_3 - A^2 d_1 - d_3}
     {A - 1 + A C d_1}, \label{noinzaehn} \\[0.5ex]
d_2 &= C d_1 d_3 - A d_1 - \frac{d_3}{A}.
\end{align}
We require $d_1, d_3 \ge 0$ and $d_2 > 0$.
This is impossible if $C < 0$ and $A > 0$.
Otherwise ($A < 0$ or $C > 0$), feasible choices exist, and the denominators in
\eqref{achtzehn}–\eqref{noinzaehn} can be kept nonzero.

\item \textbf{$B, C \neq 0$.}
Solving~\eqref{2leq} for $f_1, f_2, d_2$ gives
\begin{align}
f_1 &=
\frac{-A C d_1 + A D + C^2 d_1 d_3 - C D d_3 - 1}
     {C (-A + C d_3 + 1)}, \label{zwoeins} \\[0.5ex]
f_2 &=
\frac{-A C d_1 + A D + C^2 d_1 d_3 - C D d_3 - 1}
     {C (C d_1 - D + 1)}, \label{zwozwo} \\[0.5ex]
d_2 &=
C d_1 d_3 - D d_3 - A d_1 + \frac{A D - 1}{C}
= C d_1 d_3 - D d_3 - A d_1 + B.
\end{align}
If $A, D \ge 0$ and $B, C < 0$, then no choice of $d_1, d_3 \ge 0$ yields $d_2 \ge 0$.
In all other cases, a feasible solution exists, and denominators in
\eqref{zwoeins}–\eqref{zwozwo} can be kept nonzero.
\end{enumerate}
\end{proof}
\begin{theorem}[Theorem~\ref{3universalABCD}]
	Every rational-valued  matrix with unit determinant can be constructed with exactly three thin lenses of appropriately chosen focal lengths and using appropriate spacing. There are matrices where less than three lenses are not enough.
	\end{theorem}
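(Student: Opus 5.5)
The proof has two parts: an explicit three-lens construction for every matrix, and a witness matrix for the lower bound that is excluded by Proposition~\ref{proponelense} and Theorem~\ref{twolenses}.

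\textbf{Construction.} Let $M=\begin{pmatrix}A&B\\C&D\end{pmatrix}$ with $AD-BC=1$. The plan is to write $M = N\cdot T_{f_1}\cdot S_{d_1}$, where the first lens and spacing $d_1\ge 0$ are chosen freely. Then $N := M\, S_{d_1}^{-1} T_{f_1}^{-1}$ should satisfy one of the sufficient conditions of Theorem~\ref{twolenses}. Since the right factors have determinant~1, $N$ again has determinant~1 and is rational whenever $d_1, f_1$ are rational. Explicitly, $M S_{-d_1} = \begin{pmatrix}A & B-Ad_1\\ C & D-Cd_1\end{pmatrix}$, and multiplying on the right by $T_{f_1}^{-1}=\begin{pmatrix}1&0\\ 1/f_1&1\end{pmatrix}$ gives
\[
N=\begin{pmatrix} A+\frac{B-Ad_1}{f_1} & B-Ad_1\\ C+\frac{D-Cd_1}{f_1} & D-Cd_1\end{pmatrix}.
\]
The aim is to land in case~4 of Theorem~\ref{twolenses}. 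This requires $N_{12},N_{21},N_{22}\neq 0$ and at least one of $N_{11}<0$, $N_{12}>0$, $N_{21}>0$, $N_{22}<0$.

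The parameters are chosen in order. First pick $d_1\ge 0$ rational with $B-Ad_1\neq 0$ and $D-Cd_1\neq0$; this excludes at most two values of $d_1$, since $A,B$ are not both zero and $C,D$ are not both zero. Next fix the sign of $f_1$ so that $\frac{D-Cd_1}{f_1}$ is large and positive, and let $|f_1|$ be small and rational. Then $N_{21}>0$ and $N_{21}\neq 0$, and case~4 applies. Theorem~\ref{twolenses} then yields two more lenses realizing $N$, for three lenses in total.

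Two checks remain. First, all parameters supplied by Theorem~\ref{twolenses} must be rational. Its formulas are rational in the free choices $d_1,d_3$, so these are taken rational and away from the finitely many zeros of the denominators. Second, ``exactly three'' must hold even if $M$ could be done with fewer lenses. The construction always uses three lenses, so this is automatic, provided no lens degenerates ($f\neq0$ always holds).

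\textbf{Lower bound.} A witness is $M=\begin{pmatrix}1&-1\\-1&2\end{pmatrix}$, which has determinant $2-1=1$. Here $C=-1<0$ but $D=2>1$, so Proposition~\ref{proponelense} rules out a single lens. With zero lenses only $S_d$ with $d\ge0$ is possible, which is excluded since $C\neq0$. For two lenses, $C,D\neq0$ and $B,C\neq0$, so case~4 of Theorem~\ref{twolenses} applies. But $A=1\ge0$, $B<0$, $C<0$ and $D>0$, so none of its disjuncts holds and two lenses are impossible.

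\textbf{Main obstacle.} The delicate step is the reliance on Theorem~\ref{twolenses} case~4 as a genuine ``if'' statement with rational, nonnegative spacings and nonzero denominators. If its sufficiency proof is too sketchy, I would instead finish directly: choose $d_1,d_3$ large so that the expression for $d_2$ in that case is positive, using the sign condition just arranged. I would then verify by hand that the denominators of $f_1,f_2$ in \eqref{zwoeins}--\eqref{zwozwo} can be avoided.
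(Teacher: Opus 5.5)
Your proof is correct, but it takes a different route from the paper's. The paper works directly with the full product $S_{d_4}T_{f_3}S_{d_3}T_{f_2}S_{d_2}T_{f_1}S_{d_1}$. It treats the four spacings as free positive parameters and solves three of the four entry equations for $f_1,f_2,f_3$ in closed form, splitting into $A=0$ and $A\neq 0$ so that the remaining entry follows from the determinant. It then asserts that generic positive $d_i$ keep all numerators and denominators nonzero.

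You instead bootstrap from the two-lens result. A first lens with small focal length of the right sign forces the residual matrix $N$ to have positive lower-left entry, so only the $C>0$ disjunct of case~4 of Theorem~\ref{twolenses} is needed. This avoids the unwieldy formulas and the case split, and every check becomes transparent. Your fallback is in fact all you need, and it does not depend on the sketchier parts of the paper's two-lens analysis. Take $N=\begin{pmatrix}a&b\\c&d\end{pmatrix}$ with $c>0$ and compare the upper-right entries of $S_{-e_3}NS_{-e_1}$ and $T_{g_2}S_{e_2}T_{g_1}$. This forces $e_2=b-ae_1-de_3+ce_1e_3$, which is positive for large rational $e_1,e_3$. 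The focal lengths $g_1=e_2/(1-a+ce_3)$ and $g_2=e_2/(1-d+ce_1)$ are then finite, nonzero and rational, so the conditions $b,d\neq 0$ are not even required.

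What the paper's route buys is flexibility in the geometry. All spacings can be prescribed (generically) and only the focal lengths are adapted, whereas your construction forces a small first focal length and large later spacings.

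For the lower bound you are more explicit than the paper, which only cites Proposition~\ref{proponelense} and Theorem~\ref{twolenses} without naming a matrix. Your witness works, and the same upper-right-entry computation confirms it directly: the forced middle spacing is $-1-d_1-2d_3-d_1d_3$, which is negative for all $d_1,d_3\ge 0$.
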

\begin{proof}
	The lower bound follows by Proposition~\ref{proponelense} and Theorem~\ref{twolenses}.

	Let $M$ with $|M|=1$ be the potential  ABCD matrix, then we have to find
	$d_1, \ldots, d_4>0$ and $f_1, f_2, f_3\neq 0$ such that
	\begin{eqnarray} \label{3leq}
	M&=&  \begin{pmatrix} A & B \\ C & D \end{pmatrix}
	= S_{d_4} \cdot T_{f_3} \cdot S_{d_3} \cdot T_{f_2}
	\cdot S_{d_2} \cdot T_{f_1} \cdot S_{d_1}\quad  \text{which yields to}\\
A &=&
1
- \frac{d_4 + d_3\!\left(1 - \frac{d_4}{f_3}\right)}{f_2} \nonumber
\\
&&
- \frac{
  d_4 + d_3\!\left(1 - \frac{d_4}{f_3}\right)
  + d_2\!\left(
    1 - \frac{d_4 + d_3\!\left(1 - \frac{d_4}{f_3}\right)}{f_2}
    - \frac{d_4}{f_3}
  \right)
}{f_1}
- \frac{d_4}{f_3},  \label{3A}
\\[0.6em]
B &=&
d_4
+ d_3\!\left(1 - \frac{d_4}{f_3}\right)
+ d_2\!\left(
  1 - \frac{d_4 + d_3\!\left(1 - \frac{d_4}{f_3}\right)}{f_2}
  - \frac{d_4}{f_3}
\right) \nonumber
\\
&&
+ d_1\!\left(
  1 - \frac{d_4 + d_3\!\left(1 - \frac{d_4}{f_3}\right)}{f_2}
\right. \nonumber
\\
&&\left.
  - \frac{
    d_4 + d_3\!\left(1 - \frac{d_4}{f_3}\right)
    + d_2\!\left(
      1 - \frac{d_4 + d_3\!\left(1 - \frac{d_4}{f_3}\right)}{f_2}
      - \frac{d_4}{f_3}
    \right)
  }{f_1}
  - \frac{d_4}{f_3} \label{3B}
\right). \\
C &=&
- \frac{1 - d_3/f_3}{f_2}
- \frac{
  1
  + d_2\!\left(
    - \frac{1 - d_3/f_3}{f_2}
    - \frac{1}{f_3}
  \right)
  - \frac{d_3}{f_3}
}{f_1}
- \frac{1}{f_3},\label{3C}
\\[0.6em]
D &=&
1
+ d_2\!\left(
  - \frac{1 - d_3/f_3}{f_2}
  - \frac{1}{f_3}
\right) \nonumber
\\
&&\quad
+ d_1\!\left(
  - \frac{1 - d_3/f_3}{f_2}
  - \frac{
    1
    + d_2\!\left(
      - \frac{1 - d_3/f_3}{f_2}
      - \frac{1}{f_3}
    \right)
    - \frac{d_3}{f_3}
  }{f_1}
  - \frac{1}{f_3}
\right)
- \frac{d_3}{f_3}. \label{3D}
\end{eqnarray}

	First consider the case $A=0$, $B\neq 0$. This gives the following solutions if we solve Equations~(\ref{3A}), (\ref{3B}), (\ref{3D}) for $f_1, f_2,f_3$:
	\begin{eqnarray}
		f_1 &=& -\frac{
			d_2 \left(B^2 - B D d_4 - d_1 d_4\right)}
		{-B^2 + B d_3 + B D d_4 + d_1 d_4 + d_2 d_4},
		\\
		f_2 &=& \frac{
			B d_2 d_3}
		{-B^2 + B d_2 + B d_3 + B D d_4 + d_1 d_4},
		\\
		f_3 &=& -\frac{
			d_3 \left(B^2 - B D d_4 - d_1 d_4\right)}
		{-B^2 + B d_2 + B D d_3 + d_1 d_3 + B D d_4 + d_1 d_4}
	\end{eqnarray}
	Note that positive choices for $d_1,d_2,d_3,d_4$ exist such that the numerator and denominators are non-zero, which is enough to settle this case.
	
	Similarly, we get for $A=0$ and $C \neq 0$, if we solve   Equations~(\ref{3A}), (\ref{3C}), (\ref{3D}) for $f_1, f_2,f_3$:
	
	\begin{eqnarray}
		f_1 &=& \frac{
			d_2 \left(-1 - C D d_4 + C^2 d_1 d_4\right)}
		{-1 - C d_3 - C D d_4 + C^2 d_1 d_4 + C^2 d_2 d_4}\quad ,
		\\
		f_2 &=& \frac{
			C d_2 d_3}
		{1 + C d_2 + C d_3 + C D d_4 - C^2 d_1 d_4}\quad ,
		\\
		f_3 &=& \frac{
			d_3 \left(-1 - C D d_4 + C^2 d_1 d_4\right)}
		{-1 - C d_2 - C D d_3 + C^2 d_1 d_3 - C D d_4 + C^2 d_1 d_4}\quad .
	\end{eqnarray}
	which allows positive choices for  $d_1,d_2,d_3,d_4$ such that $f_1, f_2, f_3$ are well-defined and non-zero.

	For the last case $A\neq 0$ we solve the Equations~(\ref{3A}), (\ref{3B}), (\ref{3C}) for $f_1,f_2,f_3$ resulting in the following solutions.
	\begin{eqnarray}
		f_1&=&
		\frac{d_2\left(-A B + A^2 d_1 + d_4 + B C d_4 - A C d_1 d_4\right)}
		{-A B + A^2 d_1 + A^2 d_2 + A d_3 + d_4 + B C d_4 - A C d_1 d_4 - A C d_2 d_4}\\
		f_2&=&\frac{A d_2 d_3}
		{-A B + A^2 d_1 + A d_2 + A d_3 + d_4 + B C d_4 - A C d_1 d_4}
		\\
		f_3&=&\frac{-A B d_3 + A^2 d_1 d_3 + d_3 d_4 + B C d_3 d_4 - A C d_1 d_3 d_4}
		{-A B + A^2 d_1 + A d_2 + d_3 + B C d_3 - A C d_1 d_3 + d_4 + B C d_4 - A C d_1 d_4}
	\end{eqnarray}
	
	Again, we have to check that the numerators and denominators on the right side are never $0$. 
	which can be done for appropriate positive choices of $d_2, d_3, d_4$ given any $d_1>0$.
\end{proof}
\begin{theorem}[Theorem~\ref{universallense}]
Every rational-valued matrix with unit determinant can be produced with 18 thin lenses of a given rational focal length $f>0$ and rational valued positive distances.
\end{theorem}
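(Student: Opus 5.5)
The plan is to realize every rational $M\in\mathrm{SL}(2,\Ratio)$ as a short product of unipotent factors: upper triangular $U(b)=\tdm{1}{b}{0}{1}$ and lower triangular $L(c)=\tdm{1}{0}{c}{1}$. Each factor is then built from copies of $T_f$ and rational free-space segments. The only free parameters will be distances that are rational multiples of $f$ or $f^2$, so rationality of all spacings comes for free.

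\textbf{Basic building block.} I would use the quarter-turn element
\[
J:=S_f\,T_f\,S_f=\tdm{0}{f}{-\frac1f}{0},
\]
which needs one lens and satisfies $J^2=-I$ and $J^{-1}=J^3=-J$. A short computation gives
\[
J\,S_d\,J^{-1}=L\!\left(-\tfrac{d}{f^2}\right),\qquad J^{-1}S_d\,J=L\!\left(\tfrac{d}{f^2}\right).
\]
So for $d=|c|f^2\ge 0$, every $L(c)$ with $c\in\Ratio$ is obtained from $J^{\pm1}S_dJ^{\mp1}$. This costs $1+3=4$ lenses, where $J^3$ is three consecutive copies of $J$. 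Since $-I$ is central and $J^{-1}=-J$, the sign may be absorbed and the cost reduced to $2$ lenses. I would track such signs globally, multiplying by $J^2$ once at the end if needed.

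\textbf{Negative free space.} Conjugating the other way, $J^{-1}L(c)J$ is an upper unipotent $U(\pm cf^2)$. It is cheaper, however, to note that $L(-1/f)=T_f$ is a single lens, so $U(-f^2/f)=U(-f)$ is realized by $J^{-1}T_fJ$. More generally, $U(b)$ for $b<0$ is $J^{\mp1}S_{|b|}\ldots$ conjugated appropriately. Whatever the exact bookkeeping, each of $U(b)$ and $L(c)$ with arbitrary rational sign should cost at most $6$ lenses, and $U(b)$ with $b\ge0$ costs none.

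\textbf{Decomposition.} For $C\neq 0$ I would use the classical identity
\[
M=U\!\left(\tfrac{A-1}{C}\right)\,L(C)\,U\!\left(\tfrac{D-1}{C}\right),
\]
which is exactly the principal-plane formula from Section~\ref{abcd} rearranged. Its three factors cost at most $6+6+6=18$ lenses, with the signs fixed by at most one extra $J^2$ absorbed into one of the conjugations. For $C=0$ I would first multiply by $L(1)$, that is, realize $L(-1)\cdot M$, which has lower-left entry $C-A\neq0$ because $A\neq0$. I would then prepend $L(1)$ and merge it into the adjacent $U$-conjugation, so as not to exceed the bound; alternatively I would handle this case directly via Theorem~\ref{3universalABCD}-style algebra.

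\textbf{Main obstacle.} The main difficulty is the exact lens count: making sure the conjugating $J$'s of neighbouring factors cancel or combine ($J^{-1}J=I$ between consecutive factors) so that the total stays at most $18$, including the $C=0$ case and the sign $-I$. The positivity and rationality of all distances is routine, since every distance is $f$, $|b|$, or $|c|f^2$.
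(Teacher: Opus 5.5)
Your construction breaks down at the step that carries the content of the theorem: producing diverging lenses $L(c)$ with $c>0$ and negative free space $U(b)$ with $b<0$ from lenses of focal length $f>0$ and nonnegative spacings. The sign formulas you give for this are incorrect. Since $J^2=-I$ is central, $J^{-1}=-J$, so conjugation by $J$ and conjugation by $J^{-1}$ coincide. A direct computation gives $J S_d J^{-1}=J^{-1}S_dJ=L(-d/f^2)$, not $L(\pm d/f^2)$. Likewise $J^{-1}L(c)J=U(-cf^2)$, so $J^{-1}T_fJ=U(f)=S_f$, not $U(-f)$.

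Conjugation by $J$ therefore only swaps positive spaces $\{S_d: d\ge 0\}$ with converging lenses $\{L(c): c\le 0\}$. It can never produce a diverging lens or a negative space. Your decomposition $M=U(\frac{A-1}{C})L(C)U(\frac{D-1}{C})$ needs exactly these for many $M$. For example, $M=L(1)$ is itself a diverging lens, and every $M$ with $C>0$ needs one. Your $C=0$ reduction also needs $L(1)$. So the sentence ``each of $U(b)$ and $L(c)$ with arbitrary rational sign should cost at most $6$ lenses'' is unproven, and the tools you give cannot prove it.

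The missing idea is not a conjugation but the merging of adjacent free spaces. Since $S_aS_b=S_{a+b}$, one has $S_fT_fS_mT_fS_f=J\,S_{m-2f}\,J$. This is physically realizable whenever $m\ge 0$, even if $m-2f<0$, because the negative part is absorbed by the outer spaces of the two copies of $J$. It yields $-L\bigl((2f-m)/f^2\bigr)$, that is, $-T_{f'}$ for every $f'\notin[-f/2,0]$ with $m=f(f+2f')/f'$. This is the paper's identity \eqref{trickytrick}.

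Strongly diverging lenses with $f'\in[-f/2,0)$ need a second, nested step: $-T_{f'}=S_{-f'}T_{-f'}S_{-f'}T_{-f'}S_{-f'}$. Here each converging $T_{-f'}$ is again built, up to sign, from two $f$-lenses, and a final $D_f^2=J^2=-I$ fixes the sign. The paper applies this to each of the three arbitrary lenses from Theorem~\ref{3universalABCD}, using at most $6$ lenses apiece, for $18$ in total.

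Your $ULU$ route could be completed with the same gadget. Any $\pm L(c)$ takes at most $4$ lenses, and then $U(b)=J^{-1}L(-b/f^2)J$ handles $b<0$. That would avoid the heavy algebra of Theorem~\ref{3universalABCD}. The lens count, including the $C=0$ case, would then have to be redone.
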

\begin{proof}
In Theorem~\ref{3universalABCD} we showed that every rational-valued unit determinant matrix  can  be produced with three thin lenses $f_1, f_2, f_3 \neq 0$. So, we can concentrate on producing each of these thin lenses with focal length $f'\in \{f_1, f_2, f_3\}$ using a given thin lenses with focal length $f>0$.

Observe that we can produce $-T_{f'}$ by the following construction
\begin{equation}\label{trickytrick}
\begin{pmatrix}  -1 & 0 \\ 1/f' & -1 \end{pmatrix} = S_f \cdot T_f \cdot S_{\frac{f(f+2f')}{f'}} \cdot T_f \cdot S_f
\end{equation}
if the space in the middle between the lenses is positive, i.e. $\frac{f(f+2f')}{f'} > 0$. This is the case for $f' \in \Ratio \setminus [-f/2, 0]$. For the construction of $T_{f'}$ define 
\begin{equation} D_f := S_{f}\cdot T_{f} \cdot S_{f} = \begin{pmatrix}  0 & f \\ -1/f & 0 \end{pmatrix} \end{equation} 
and observe that 
\begin{equation} D_f^2 = \begin{pmatrix}  -1 & 0 \\ 0 & -1 \end{pmatrix}\end{equation}
Therefore, for $f' \in \Ratio \setminus [-f/2, 0]$ we can give a four lenses construction as follows.
\begin{equation}T_{f'}= D_f^2 \cdot S_f \cdot T_f \cdot S_{\frac{f(f+2f')}{f'}} T_f \cdot S_f
\label{nearlyallTfy}\end{equation}
This implies that for $f_1,f_2,f_3 \in \Ratio \setminus [-f/2, 0]$ there is a construction with 12 lenses.

It remains to consider the case $f' \in [-f/2, 0)$. Note that the construction in
(\ref{trickytrick}) can be applied to $-T_{-f'}$ 
\begin{equation}
  -T_{-f'}
  = \begin{pmatrix} -1 & 0 \\ -1/f' & -1 \end{pmatrix}
  = S_f \cdot T_f \cdot S_{\frac{f(f-2f')}{-f'}} \cdot T_f \cdot S_f .
\end{equation}
Thus, we obtain $T_{-f'} = D_f^2 \cdot (-T_{-f'})$ using four lenses.
Since $f' \notin [f/2, 0)$, we can again apply (\ref{trickytrick}) by substituting the four lens construction for $T_{-f'}$ and obtain
\begin{eqnarray}
  -T_{f'}
   &=& S_{-f'} \cdot T_{-f'} \cdot S_{-f'} \cdot T_{-f'} \cdot S_{-f'} \\
  & =& S_{-f'} \cdot (-T_{-f'}) \cdot S_{-f'} \cdot (-T_{-f'}) \cdot S_{-f'} .
\end{eqnarray}
using $\frac{(-f')(-f' + 2f')}{f'} = -f'$. For  $f' \in [-f/2, 0)$ the overall construction is therefore
\begin{equation}
T_{f'} =
 D_f^2 \cdot 
S_{-f'} \cdot \underbrace{ S_f \cdot T_f \cdot S_{\frac{f(f-2f')}{-f'}} \cdot T_f \cdot S_f}_{-T_{-f'}}
\cdot 
 S_{-f'} \cdot   \underbrace{ S_f \cdot T_f \cdot S_{\frac{f(f-2f')}{-f'}} \cdot T_f \cdot S_f}_{-T_{-f'}} \cdot S_{-f'}
 \end{equation}
which is an optical system using 6 thin lenses of focal length $f$ for implementing a focal length $f' \in \{f_1,f_2,f_3\}$. Since Theorem~\ref{3universalABCD} shows that any unit determinant matrix can be expressed by these three lenses, we have shown a construction with at most 18 lenses for the general case.
%
%
\end{proof}

%
%

\begin{theorem}[Theorem \ref{anylength}]
Given a single type of thin lenses with focal length $f>0$, then for every ABCD matrix there is some $d_0$ such that for all $d\geq d_0$ there is a construction with constant number of lenses which length from start to finish is exactly $d$.
 \end{theorem}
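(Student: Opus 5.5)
Fix a target matrix $M$ with $\det M = 1$. By Theorem~\ref{universallense} (the 18-lens construction) we already have a realization $M = S_{e_k} T_f S_{e_{k-1}} \cdots T_f S_{e_0}$ with some fixed total length $L = \sum_i e_i$. The plan is to pad this construction with an \emph{identity-like block} of adjustable length. The block must use lenses of focal length $f$ only, realize the identity matrix exactly, and have a length that can be tuned continuously over an interval of the form $[\ell_0,\infty)$. Prepending or appending such a block to the realization of $M$ then gives total length $d = L + \ell$ for every $\ell \ge \ell_0$, so $d_0 := L + \ell_0$ works.

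\textbf{Building the padding block.} Using $D_f = S_f T_f S_f$ and $D_f^2 = -I$ gives $D_f^4 = I$, but this has the fixed length $8f$. To get a free parameter, I would use the four-lens construction~(\ref{nearlyallTfy}) for $T_{f'}$ and compose two such blocks with inverse focal lengths. Since $T_{f'}T_{-f'} = I$, only the free spaces inside the two blocks remain to be controlled. Concretely, take $f' = g > 0$ and $f'' = -g$. The middle spacings are then $f(f+2g)/g > 0$ and $f(f-2g)/(-g) = f(2g-f)/g$, and the second is positive when $g > f/2$. Their sum is $4f$, which does not depend on $g$, so this attempt yields no tunable length.

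\textbf{Introducing the tunable parameter.} Instead I would sandwich free space: for $s \ge 0$, form $S_s\, X\, S_s$ with $X$ chosen so that the product equals $I$. This requires $X = S_{-s}S_{-s} = S_{-2s}$, a \emph{negative} space. Writing $S_{-2s} = D_f^2 \cdot (-S_{-2s})$, the task reduces to realizing $-S_{-2s}$ with $f$-lenses of bounded count and length $L_X(s)$, so that the block length $2s + L_X(s)$ is continuous and unbounded in $s$. A concrete route is to conjugate by $D_f$: one checks $D_f S_{-t} D_f^{-1} = T_{f^2/t}$ up to sign, which is a thin lens of focal length proportional to $f^2/t$. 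That thin lens can be built via~(\ref{nearlyallTfy}), with a middle spacing $f(f+2f')/f'$ that depends continuously on $t$. The block length is therefore a continuous function $\ell(s)$ that tends to $\infty$ as $s \to \infty$. By the intermediate value theorem every $\ell \ge \ell_0 := \min_{s\ge s_1}\ell(s)$ is attained, provided $\ell$ is monotone or at least unbounded on a ray.

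\textbf{Main obstacle.} The hard part is ensuring that all internal spacings stay nonnegative uniformly along the parameter ray, so that the signs $f' \notin[-f/2,0]$ hold for all large $s$, and checking that the block length really grows without bound instead of staying constant, as happened in the naive attempt above. If the conjugation route also produces a constant total length, I would fall back on a one-parameter family obtained by solving the three-lens equations of Theorem~\ref{3universalABCD} for $M = I$ with $d_1$ free. That gives focal lengths as rational functions of $d_1$, each replaced by its $f$-lens implementation, and then continuity together with unboundedness of the total length as $d_1\to\infty$ gives the claim. Note that the theorem asks only for real lengths, not rational ones, consistent with the remark that quadratic equations may produce irrational positions.
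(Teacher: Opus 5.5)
\textbf{There is a genuine gap in the padding block you commit to.} Your overall plan is the paper's: append to the 18-lens realization of $M$ an identity block of $f$-lenses whose length sweeps a ray $[\ell_0,\infty)$. The block fails at the step ``that thin lens can be built via (\ref{nearlyallTfy})''.

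From $S_{-2s}=D_f^{-1}T_{f'}D_f$ you get $f'=-f^2/(2s)$. But (\ref{nearlyallTfy}) requires $f'\notin[-f/2,0]$, whereas $f'\to 0^-$, so $f'\in[-f/2,0)$ for every $s\ge f$. Your hope in the ``main obstacle'' paragraph that the sign condition holds for all large $s$ is exactly backwards. Concretely, the middle spacing $f(f+2f')/f'$ equals $2f-2s<0$. In fact the four-lens realization of $T_{-f^2/t}$ is literally $D_f^2\cdot D_fS_{-t}D_f$. So conjugating by $D_f$ and then applying (\ref{nearlyallTfy}) just hands back the negative space $S_{-t}$ you were trying to eliminate; the step is circular. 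The failure mode is negative spacing, not constant length, so the trigger you set for your fallback would not even fire.

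\textbf{Two ways to repair it.}
(a) Realize $T_{f'}$ for $f'\in[-f/2,0)$ with the six-lens construction from the proof of Theorem~\ref{universallense}. It uses lenses of positive focal length $-f'$ and has only positive spacings. Then $S_s\,D_f^3\,T_{-f^2/(2s)}\,D_f\,S_s=I$ uses ten $f$-lenses, and its length $6s+20f+3f^2/(2s)$ is continuous on $[f,\infty)$ and unbounded.

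(b) Carry out your fallback, which does work once checked. For $M=I$, the $A\neq 0$ formulas in the proof of Theorem~\ref{3universalABCD} give $f_1=d_2(d_1+d_4)/\Sigma$, $f_2=d_2d_3/\Sigma$, $f_3=d_3(d_1+d_4)/\Sigma$ with $\Sigma=d_1+d_2+d_3+d_4$, all positive. For example, $S_\lambda T_{\lambda/2}S_\lambda T_{\lambda/4}S_\lambda T_{\lambda/2}S_\lambda=I$. Replacing each lens by (\ref{nearlyallTfy}) gives length $4\lambda+24f+8f^2/\lambda$. Positivity matters here: the construction of Theorem~\ref{universallense} switches branch at $f'=-f/2$ with a jump in length, so continuity requires the focal lengths to stay on one branch along the parameter ray.

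\textbf{How the paper does it.} The paper avoids all of this by deforming $D_f$ directly. For every $x>0$, the two-lens block $S_{1/x+f}T_fS_{f(2+fx)}T_fS_{1/x+f}$ squares to $-I$. Its fourth power is therefore an eight-lens identity of length $8/x+16f+4f^2x$, which takes every value $\ge(16+8\sqrt2)f$.
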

 \begin{proof}
 We first construct the ABCD matrix according to the construction given in the proof of Theorem~\ref{universallense} using some space $d'$. 
 Now observe that 
 \begin{equation}
M :=  \begin{pmatrix} 0 & x \\ -1/x & 0  \end{pmatrix}  =
 S_{\frac{1}{x} + f} \cdot T_{f} \cdot S_{f (2+ f x)} \cdot T_{f}  \cdot   S_{\frac{1}{x} + f}
 \end{equation}
 Since $M^4 = M_{\text{wire}}$ we have a variable length construction for connections with overall length $d  =  8/x + 4 f (4 + f x)$.  Solving for $x$ yields
 \[
x=\frac{d-16f\pm\sqrt{d^2-32df+128f^2}}{8f^2}
\quad (f\neq 0)
\]
So, for $ d \ge (16 + 8\sqrt{2})f$ there exists always a real-valued solution for $x$ which can be used for a filling up the gap between the finish line and the universal ABCD matrix construction.  Note that rational valued solutions for $x$ exist, if $d$ and $f$  are chosen appropriately from the rational numbers. 
 \end{proof}










\end{document}